\documentclass[12pt]{article}

\usepackage{amsmath,amssymb,amsfonts}
\usepackage{algorithmic}
\usepackage{graphicx}
\usepackage{textcomp}
\usepackage{algorithm}
\usepackage{subcaption}
\usepackage{bm}
\usepackage{amsthm}
\usepackage{float}
\usepackage{placeins}
\usepackage{geometry}
\usepackage{hyperref}
\usepackage{adjustbox}

\newtheorem{theorem}{Theorem}
\newtheorem{lemma}{Lemma}
\newtheorem{corollary}{Corollary}
\newtheorem{proposition}{Proposition}

\title{Re-Rooting-Based Fault-Tolerant One-to-All Broadcasting in Dense Eisenstein--Jacobi Networks}

\author{Bader Albader\\
\small Department of Computer Science, Kuwait University, Kuwait\\
\small \texttt{albader@cs.ku.edu.kw}}

\date{}

\begin{document}

\maketitle

\begin{abstract}
Dense Eisenstein--Jacobi networks are degree-six algebraic interconnection topologies with regular structure, vertex symmetry, small diameter, and efficient communication algorithms. These properties make them suitable for parallel and on-chip communication systems in which collective operations such as one-to-all broadcasting are frequent. Existing optimal broadcasting algorithms for dense hexagonal/Eisenstein--Jacobi networks assume fault-free operation. However, a faulty internal forwarding node may interrupt message propagation and prevent complete delivery. This paper proposes a lightweight re-rooting-based fault-tolerant broadcasting method for dense Eisenstein--Jacobi networks. The main idea is to relocate the effective broadcast source to a new source node such that each faulty node is located at graph distance equal to the network diameter from the new source. Consequently, faulty nodes become leaf-level nodes in the broadcast process and are not required to forward the message. We present source-selection algorithms for one- and two-node failures and prove that for any pair of faulty nodes in a dense Eisenstein--Jacobi network there exists a common distance-diameter node that can serve as a valid re-rooted source. The source-selection procedure requires linear time in the network diameter. Equivalently, since $N=3t^2+3t+1$, the selection cost is $O(\sqrt{N})$ in the number of nodes. Since the standard one-to-all broadcast completes in one diameter time and the relocation phase is also bounded by one diameter, the proposed method completes in at most twice the network diameter. We also show that the two-fault guarantee does not generally extend to arbitrary three-fault configurations by giving an explicit counterexample. The proposed approach improves broadcast reliability without constructing redundant spanning trees, backup paths, or additional broadcast structures.
\end{abstract}

\noindent\textbf{Keywords:} Broadcasting, Eisenstein--Jacobi networks, fault tolerance, hexagonal mesh networks, interconnection networks, re-rooting, source relocation.

\section{Introduction}
\label{sec:introduction}

Interconnection networks play a central role in the design of parallel computers, network-on-chip architectures, multicore systems, distributed accelerators, and embedded communication fabrics. In such systems, processors or routers must exchange control information, synchronization messages, configuration data, and application data efficiently. Among the most important collective communication primitives is one-to-all broadcasting, in which a source node disseminates a message to every other node in the network.

Two-dimensional toroidal and mesh-based networks have been widely studied because of their simple layout and efficient local communication. However, alternative algebraic networks can provide better degree-diameter trade-offs and more symmetric communication patterns. Dense Gaussian networks and Eisenstein--Jacobi (EJ) networks are two such families. Gaussian networks provide degree-four topologies based on Gaussian integers, whereas EJ networks provide degree-six topologies based on Eisenstein--Jacobi integers. The degree-six EJ structure is closely related to wrap-around hexagonal meshes and admits efficient routing and broadcasting algorithms.

Dense EJ networks are especially attractive because they combine algebraic regularity with a natural hexagonal geometry. The network is vertex-transitive, every node has six neighbors, and the nodes at each distance form a regular hexagonal boundary. For the dense hexagonal network with parameter $n$, the number of nodes is
\begin{equation}
N=3n^2-3n+1,
\end{equation}
and the diameter is
\begin{equation}
t=n-1.
\end{equation}
The standard one-to-all broadcast in this network completes in exactly $t$ parallel steps by propagating the message through six directional sectors.

Although the existing broadcast algorithm is optimal in the fault-free case, it is vulnerable to node failures. If a faulty node appears on an internal forwarding position of the broadcast tree, then all nodes depending on that forwarding node may fail to receive the message. This motivates a fault-tolerant approach that preserves the simplicity of the original broadcast while avoiding the need for redundant spanning trees or adaptive rerouting.

This paper proposes a re-rooting-based fault-tolerant broadcasting method for dense EJ networks. Rather than modifying the broadcast tree or maintaining multiple backup structures, the method dynamically relocates the effective broadcast source. The new source node is selected so that each faulty node lies at graph distance $t$, the network diameter, from the new source. Since nodes at distance $t$ are leaf-level nodes in the standard broadcast process, faulty nodes placed on this boundary are not required to forward packets.

The contribution of this work is specific to the dense Eisenstein--Jacobi topology. Dense EJ networks have a degree-six neighborhood structure, a hexagonal graph-distance geometry, and a six-sector broadcast tree. The proposed method exploits these EJ-specific properties by using the graph-distance-$t$ hexagonal boundary as a relocation target for faulty nodes. The main theoretical argument is based on a boundary-difference coverage property for the EJ network, namely that every node displacement can be represented as the difference of two distance-$t$ boundary nodes. This allows the two-fault source-selection problem to be solved directly within the EJ geometry without modifying the underlying broadcast algorithm.

The main contributions of this paper are as follows:
\begin{itemize}
\item We propose a lightweight source-relocation method for fault-tolerant one-to-all broadcasting in dense Eisenstein--Jacobi networks.
\item We develop source-selection algorithms for one-node and two-node failure scenarios.
\item We prove that for any two faulty nodes in a dense EJ network there exists a node whose graph distance from both faulty nodes is equal to the network diameter.
\item We show that this guarantee does not generally extend to arbitrary three-node failures by presenting an explicit counterexample.
\item We analyze the communication and computational complexity of the method and show that the total broadcast time is bounded by $2t$, where $t$ is the network diameter.
\item We position the proposed method relative to common fault-tolerant broadcasting approaches and discuss its limitations and practical implications.
\end{itemize}

The remainder of this paper is organized as follows. Section~\ref{sec:related} reviews related work. Section~\ref{sec:background} introduces dense EJ networks and their distance structure. Section~\ref{sec:ej_broadcasting} reviews the standard one-to-all broadcasting algorithm. Section~\ref{sec:method} presents the proposed re-rooting method. Section~\ref{sec:proof} proves the two-fault existence result and gives a three-fault counterexample. Section~\ref{sec:complexity} analyzes the cost of the method. Section~\ref{sec:evaluation} presents the experimental evaluation. Section~\ref{sec:conclusion} concludes the paper.

\section{Related Work}
\label{sec:related}

Interconnection-network topology plays an important role in the design of scalable parallel and distributed systems. Regular topologies such as meshes, tori, circulant graphs, Gaussian networks, and Eisenstein--Jacobi networks have been widely studied because their algebraic structure supports efficient routing, broadcasting, and collective communication. Among these topologies, Eisenstein--Jacobi networks are particularly attractive because they provide a degree-six hexagonal structure with small diameter, high symmetry, and efficient communication behavior.

Hexagonal mesh networks were studied in early work on parallel architectures and reliable communication. Chen, Shin, and Kandlur investigated addressing, routing, and broadcasting in hexagonal mesh multiprocessors, while Kandlur and Shin developed reliable broadcast algorithms for HARTS, a hexagonal mesh multicomputer. These works showed that the hexagonal topology provides a natural six-directional communication structure suitable for efficient broadcast and routing. Later work showed that these hexagonal mesh networks can be interpreted algebraically as a special class of Eisenstein--Jacobi networks. In particular, the network \(H_n\) is generated by \(\alpha=n+(n-1)\omega\), has \(3n^2-3n+1\) nodes, and has diameter \(n-1\). This algebraic formulation provides a compact two-coordinate addressing scheme and simplifies shortest-path routing and collective communication.

Eisenstein--Jacobi networks were further studied as algebraic interconnection networks based on the ring of EJ integers. In such networks, nodes are represented as congruence classes modulo an EJ integer, and adjacency is defined by the six unit directions \(\pm1\), \(\pm\omega\), and \(\pm\omega^2\). This yields a regular degree-six topology. The dense family generated by \(\alpha=n+(n-1)\omega\) is especially important because it achieves the maximum number of nodes for the given degree and diameter.

Collective communication algorithms on hexagonal and EJ networks have also been studied extensively. The basic one-to-all broadcast operation sends one message from a source node to all other nodes in the network. Fault tolerance is a central concern in interconnection networks because node or link failures can interrupt message propagation and reduce communication reliability. Classical fault-tolerant broadcasting methods often rely on redundant communication structures, such as multiple spanning trees, independent spanning trees, edge-independent spanning trees, backup paths, or adaptive rerouting.

Fault-tolerant routing has also been widely investigated in Network-on-Chip systems and many-core architectures. These systems often require predictable latency, low routing complexity, and robustness against defective components. Existing methods include adaptive routing around faulty regions, priority-based quality-of-service routing, protection routing, machine-learning-assisted routing, and reinforcement-learning-based mapping.

The method proposed in this paper follows a different approach. Instead of constructing redundant broadcast trees or continuously adapting the route during propagation, the proposed method changes the effective source of the broadcast. The new source \(NS\) is selected so that the faulty nodes lie on the graph-distance-\(t\) boundary of \(NS\), where \(t=n-1\) is the diameter of the dense EJ network. Under the standard one-to-all broadcast tree, boundary nodes are leaf-level nodes and are not required to forward the message. Therefore, once the faulty nodes are placed on this boundary, the original broadcast algorithm can be reused without modification.

This re-rooting approach is topology-specific. It exploits the vertex transitivity, six-directional symmetry, and boundary structure of dense EJ networks. Unlike redundant spanning-tree methods, the proposed method does not require constructing or storing multiple delivery structures. Unlike multiple-path routing, it does not duplicate the broadcast message along several paths. Unlike adaptive routing, it does not require runtime fault-status propagation or local route recomputation during the broadcast.

\section{Dense Eisenstein--Jacobi Networks}
\label{sec:background}

This section introduces the dense EJ network model used throughout the paper.

\subsection{Eisenstein--Jacobi Integers}
Let
\begin{equation}
\omega=\frac{1+i\sqrt{3}}{2}.
\end{equation}
The Eisenstein--Jacobi integers are numbers of the form
\begin{equation}
\mathbb{Z}[\omega]=\{x+y\omega \mid x,y\in\mathbb{Z}\}.
\end{equation}
The six natural directions in the triangular lattice are represented by the units
\begin{equation}
\pm 1,\quad \pm \omega,\quad \pm \omega^2.
\end{equation}
A node can therefore be represented by the two integer coordinates $(x,y)$ corresponding to the EJ number $x+y\omega$.

\subsection{Dense Hexagonal/EJ Network}
Let
\begin{equation}
\alpha=n+(n-1)\omega,
\end{equation}
where $n\geq 2$. The dense EJ network generated by $\alpha$ is the graph
\begin{equation}
H_n=(V,E),
\end{equation}
where the node set is the set of residue classes modulo $\alpha$,
\begin{equation}
V=\mathbb{Z}[\omega]/(\alpha),
\end{equation}
and two nodes $u,v\in V$ are adjacent if
\begin{equation}
v-u\equiv \pm 1,\ \pm\omega,\ \pm\omega^2 \pmod{\alpha}.
\end{equation}
Thus, each node has degree six.

The network has
\begin{equation}
N=3n^2-3n+1
\label{eq:Nn}
\end{equation}
nodes and diameter
\begin{equation}
t=n-1.
\label{eq:diameter}
\end{equation}
Equivalently, in terms of the diameter $t$, the number of nodes is
\begin{equation}
N=3t^2+3t+1.
\label{eq:Nt}
\end{equation}

The dense EJ network is vertex-transitive. Therefore, distance-based arguments can be carried out with respect to the origin without loss of generality and then translated to any other node.

\subsection{Distance in the Hexagonal Lattice}
For a node represented by $x+y\omega$, the graph distance from the origin in the infinite triangular lattice is
\begin{equation}
D(x,y)=
\begin{cases}
|x+y|, & xy\geq 0,\\
\max\{|x|,|y|\}, & xy<0.
\end{cases}
\label{eq:hex_distance}
\end{equation}
Equivalently,
\begin{equation}
D(x,y)=\max\{|x|,|y|,|x+y|\}.
\label{eq:hex_distance_max}
\end{equation}
The nodes at distance $j$ from any fixed node form a hexagonal boundary of size $6j$, for $1\leq j\leq t$. In particular, the graph-distance-$t$ boundary contains $6t$ nodes.

We define the boundary of the origin by
\begin{equation}
B_t=\{z\in H_n\mid d(0,z)=t\}.
\label{eq:boundary}
\end{equation}
By vertex transitivity, the boundary of a node $u$ is
\begin{equation}
B_t(u)=u+B_t.
\label{eq:boundary_translate}
\end{equation}

\subsection{Circulant Representation}
The dense EJ network $H_n$ is isomorphic to a degree-six circulant graph with
\begin{equation}
N=3n^2-3n+1
\end{equation}
nodes and jumps
\begin{equation}
n-1,\quad n,\quad 2n-1.
\end{equation}
In terms of $t=n-1$, the jumps are
\begin{equation}
t,\quad t+1,\quad 2t+1.
\end{equation}
This representation is useful for implementation and simulation, while the EJ representation is useful for geometric reasoning and proof construction.

\subsection{Integer Labeling of EJ Nodes}
\label{subsec:integer_labeling}

Although the EJ coordinate representation \(x+y\omega\) is geometrically convenient, an integer-label representation is useful for simulations, examples, and comparison with circulant-network descriptions. In the dense EJ network generated by
\begin{equation}
	\alpha=n+(n-1)\omega,
\end{equation}
the total number of nodes is
\begin{equation}
	N=3n^2-3n+1.
\end{equation}
The corresponding diameter is
\begin{equation}
	t=n-1.
\end{equation}

We define the integer-labeling function
\begin{equation}
	\phi:H_n\rightarrow \mathbb{Z}_N
\end{equation}
by
\begin{equation}
	\phi(x+y\omega)\equiv (n-1)x+(2n-1)y \pmod N.
	\label{eq:ej_integer_label}
\end{equation}
Thus, each EJ coordinate node \(x+y\omega\) is assigned a unique integer label modulo \(N\).

This labeling is consistent with the degree-six circulant representation of the dense EJ network. Indeed,
\begin{equation}
	\phi(1)=n-1,
\end{equation}
and
\begin{equation}
	\phi(\omega)=2n-1.
\end{equation}
Since
\begin{equation}
	\omega^2=\omega-1,
\end{equation}
we also have
\begin{equation}
	\phi(\omega^2)=\phi(\omega-1)=(2n-1)-(n-1)=n.
\end{equation}
Therefore, the six EJ adjacency directions
\begin{equation}
	\pm1,\ \pm\omega,\ \pm\omega^2
\end{equation}
correspond in the integer representation to the six circulant jumps
\begin{equation}
	\pm(n-1),\ \pm(2n-1),\ \pm n
\end{equation}
modulo \(N\). Hence, the EJ network \(H_n\) is equivalently represented as the degree-six circulant graph
\begin{equation}
	C_N(n-1,n,2n-1).
\end{equation}

The labeling function also preserves relative node differences. For any two EJ nodes \(U,V\in H_n\),
\begin{equation}
	\phi(U-V)\equiv \phi(U)-\phi(V)\pmod N.
	\label{eq:ej_difference_preserving}
\end{equation}
This property allows relative positions, wrap-around adjacency, and source-selection conditions to be expressed using modular integer arithmetic.

\begin{figure}[H]
	\centering
	\begin{subfigure}{0.48\linewidth}
		\centering
		\includegraphics[width=\linewidth]{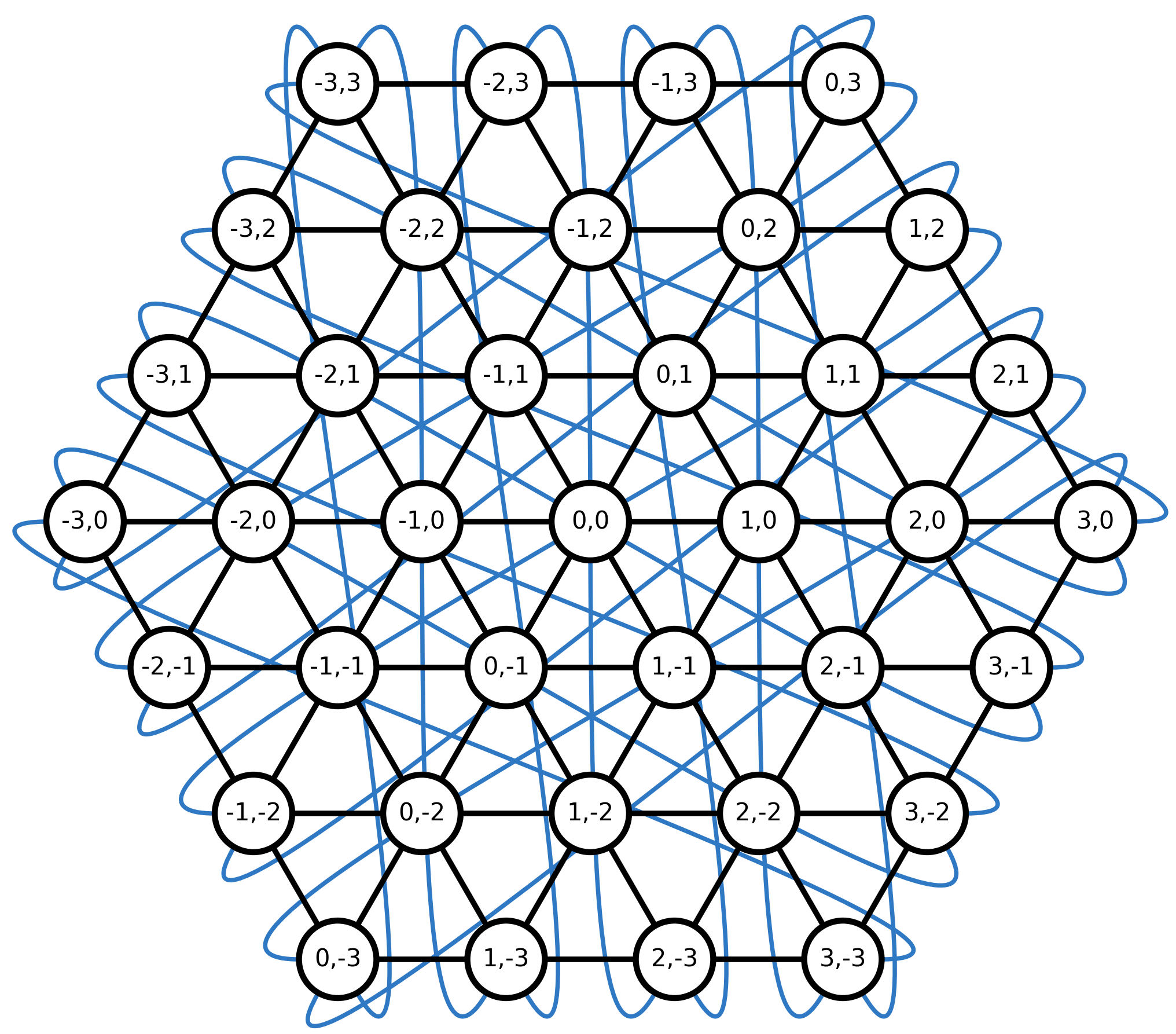}
		\caption{EJ coordinate representation}
	\end{subfigure}
	\hfill
	\begin{subfigure}{0.48\linewidth}
		\centering
		\includegraphics[width=\linewidth]{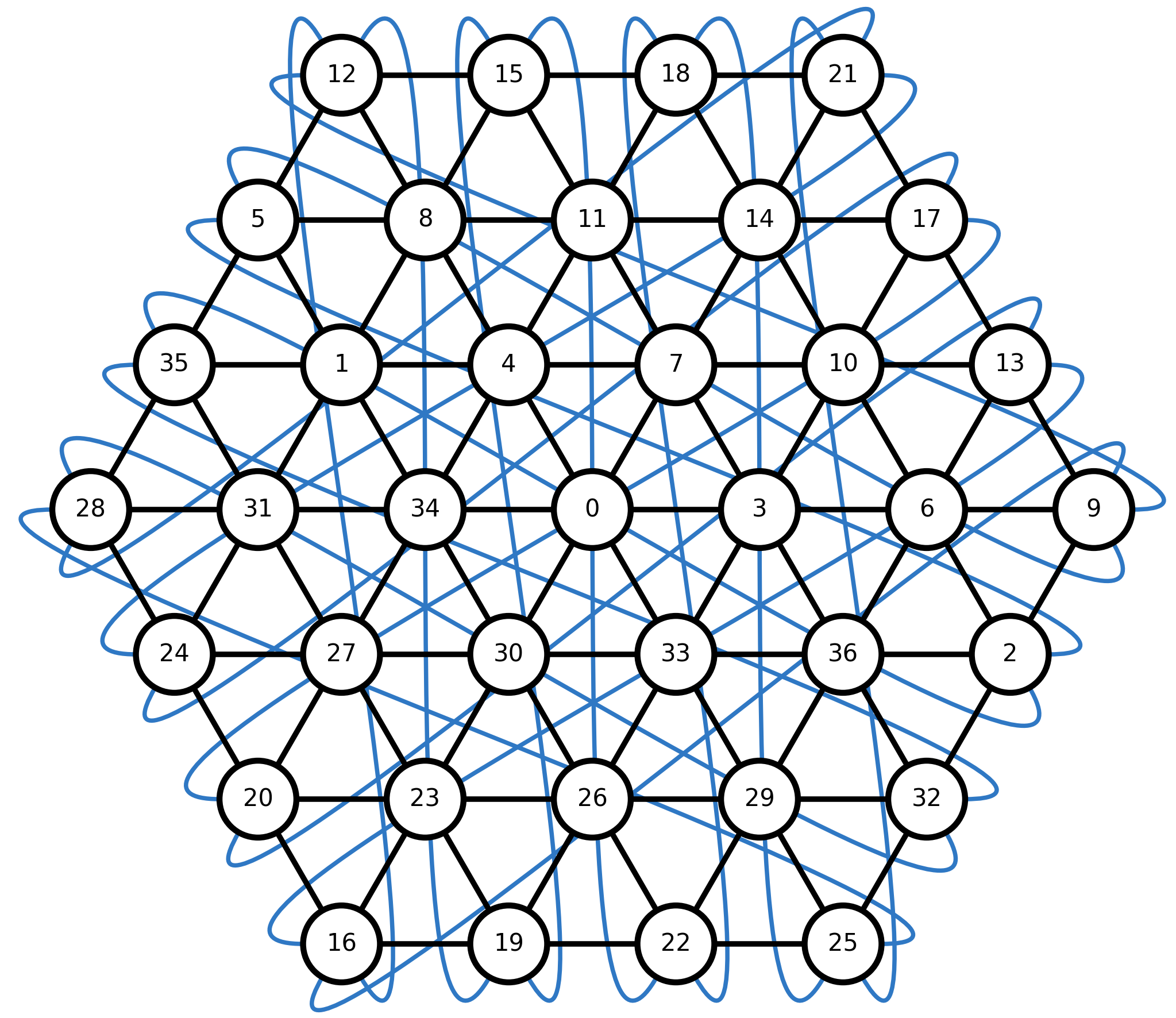}
		\caption{Integer-label representation}
	\end{subfigure}
	\caption{Dense Eisenstein--Jacobi network \(H_4\) generated by \(\alpha=4+3\omega\). The network has diameter \(t=3\) and \(N=37\) nodes. The left figure shows the EJ coordinate representation \(x+y\omega\), while the right figure shows the corresponding integer-label representation obtained from \(\phi(x+y\omega)\equiv 3x+7y \pmod{37}\). The blue curves represent wrap-around links induced by the modular EJ topology.}
	\label{fig:ej_h4_coordinates_integer}
\end{figure}

For example, consider \(H_4\). In this case,
\begin{equation}
	n=4,\qquad t=3,\qquad N=3(4)^2-3(4)+1=37.
\end{equation}
Using \eqref{eq:ej_integer_label}, the six neighbors of node \(0\) are labeled as follows:
\begin{align}
	\phi(1) &\equiv 3 \pmod{37},\\
	\phi(\omega) &\equiv 7 \pmod{37},\\
	\phi(\omega^2)=\phi(-1+\omega) &\equiv 4 \pmod{37},\\
	\phi(-1) &\equiv 34 \pmod{37},\\
	\phi(-\omega) &\equiv 30 \pmod{37},\\
	\phi(-\omega^2)=\phi(1-\omega) &\equiv 33 \pmod{37}.
\end{align}
Thus, node \(0\) is adjacent to nodes
\begin{equation}
	3,\ 4,\ 7,\ 30,\ 33,\ 34
\end{equation}
in the integer-label representation, which are exactly the jumps
\begin{equation}
	\pm3,\ \pm4,\ \pm7
\end{equation}
modulo \(37\).

Figure~\ref{fig:ej_h4_coordinates_integer} illustrates the same \(H_4\) network using both EJ coordinates and integer labels. The left representation is useful for geometric reasoning, while the right representation is useful for simulations and circulant-graph implementation.

\section{One-to-All Broadcasting in Dense EJ Networks}
\label{sec:ej_broadcasting}

In this section, we review the standard one-to-all broadcasting procedure in dense Eisenstein--Jacobi networks.

Let \(H_n\) be the dense EJ network generated by
\begin{equation}
	\alpha=n+(n-1)\omega.
\end{equation}
The network diameter is
\begin{equation}
	t=n-1.
\end{equation}
Since \(H_n\) is homogeneous and vertex-transitive, the one-to-all broadcast can be described without loss of generality from source node \(0\).

The EJ network has six natural communication directions:
\begin{equation}
	1,\ \omega,\ \omega^2,\ -1,\ -\omega,\ -\omega^2.
\end{equation}
These correspond to the six geometric directions
\begin{equation}
	E,\ NE,\ NW,\ W,\ SW,\ SE,
\end{equation}
respectively. Thus, the broadcast expands from the source through six sectors of the hexagonal topology.

At the beginning of the broadcast, the source sends the message to its six neighbors. Each neighbor becomes the root of one sector of the broadcast tree. The broadcast then proceeds outward in parallel through the six sectors. In each sector, the forwarding pattern follows the shortest-path structure of the hexagonal lattice. Nodes on the main sector axis forward the packet to two outward directions, while internal sector nodes forward the packet along one direction. This prevents duplicate packet reception and ensures that every node receives the message exactly once.

\begin{figure}[H]
	\centering
	\includegraphics[width=0.55\linewidth]{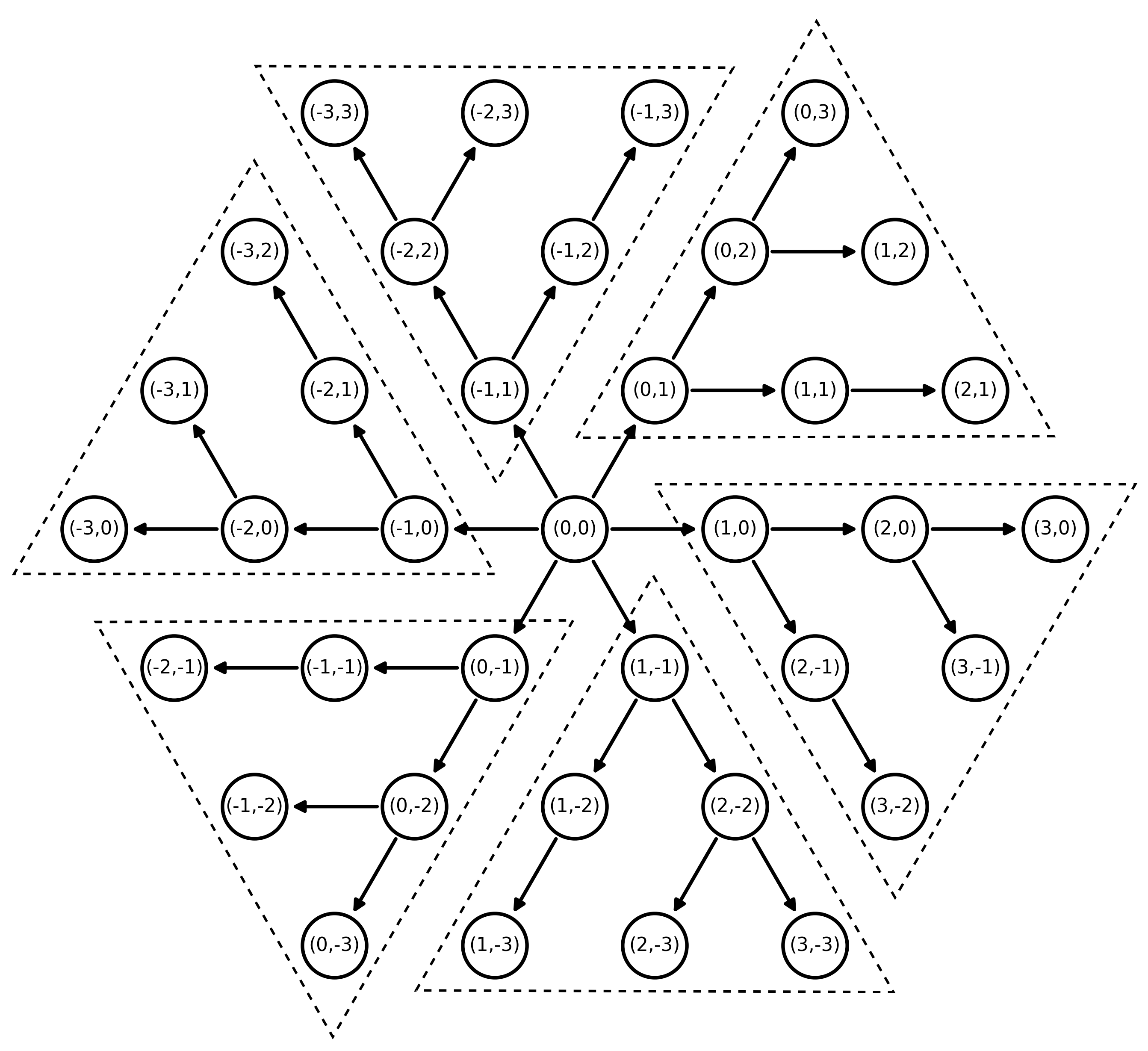}
	\caption{One-to-all broadcasting tree in the dense Eisenstein--Jacobi network \(H_4\), where the source node is \(0\). The broadcast expands through the six EJ directions and covers the six sectors of the hexagonal topology. Boundary nodes at graph distance \(t=3\) are leaf-level nodes and do not forward the message further.}
	\label{fig:ej_broadcast_tree}
\end{figure}

Figure~\ref{fig:ej_broadcast_tree} illustrates the six-sector structure of the broadcast process. The important property for the proposed fault-tolerant method is that nodes at graph distance \(t\) from the source are leaf-level nodes. Such nodes may receive the message, but they are not required to forward it to any other node. Therefore, a faulty node located at distance \(t\) from the source does not disrupt the broadcast process.

A simple packet-header structure can be used to implement the broadcast. The packet contains a broadcast-control field, a distance field, and a six-bit direction field. The distance field is initialized to the network diameter \(t\). The six-bit direction field, denoted by \(DIR\), indicates which of the six output ports are enabled for forwarding. In this paper, the bits of \(DIR\) are ordered as
\begin{equation}
	(E,\ SE,\ SW,\ W,\ NW,\ NE).
	\label{eq:dir_bit_order}
\end{equation}
Thus, the one-direction masks are interpreted as
\begin{equation}
	100000=E,\quad
	010000=SE,\quad
	001000=SW,
\end{equation}
\begin{equation}
	000100=W,\quad
	000010=NW,\quad
	000001=NE.
\end{equation}
At each hop, the distance field is decremented. When the distance field reaches zero, the packet is consumed but not forwarded. Thus, the broadcast terminates exactly at the graph-distance-\(t\) boundary.

The forwarding rule is summarized in Algorithm~\ref{alg:ej_broadcast_forwarding}, and the corresponding direction-mask actions are listed in Table~\ref{tab:direction_masks}.

\begin{algorithm}[H]
	\caption{One-to-all broadcast forwarding in \(H_n\)}
	\label{alg:ej_broadcast_forwarding}
	\begin{algorithmic}[1]
		\REQUIRE Distance field \(dist\), direction field \(DIR\)
		\STATE The six-bit direction field is ordered as \((E,SE,SW,W,NW,NE)\).
		\IF{\(dist=t\)}
		\STATE Send to \(E,NE,NW,W,SW,SE\) using initial sector masks
		\STATE \(110000,100001,000011,000110,001100,011000\), respectively
		\ELSIF{\(dist=0\)}
		\STATE Consume packet and stop forwarding
		\ELSE
		\STATE Consume packet and set \(dist\leftarrow dist-1\)
		\STATE Apply the sector forwarding rules in Table~\ref{tab:direction_masks}
		\ENDIF
	\end{algorithmic}
\end{algorithm}

\begin{table}[H]
	\caption{Direction-mask forwarding rules used by Algorithm~\ref{alg:ej_broadcast_forwarding}.}
	\label{tab:direction_masks}
	\centering
	\begin{tabular}{c c}
		\hline
		Incoming mask & Forwarding action \\
		\hline
		110000 & \(E:110000,\ SE:010000\) \\
		100001 & \(NE:100001,\ E:100000\) \\
		000011 & \(NW:000011,\ NE:000001\) \\
		000110 & \(W:000110,\ NW:000010\) \\
		001100 & \(SW:001100,\ W:000100\) \\
		011000 & \(SE:011000,\ SW:001000\) \\
		100000 & \(E:100000\) \\
		010000 & \(SE:010000\) \\
		001000 & \(SW:001000\) \\
		000100 & \(W:000100\) \\
		000010 & \(NW:000010\) \\
		000001 & \(NE:000001\) \\
		\hline
	\end{tabular}
\end{table}

Table~\ref{tab:direction_masks} lists the forwarding actions used by Algorithm~\ref{alg:ej_broadcast_forwarding}. In each two-direction forwarding case, the first outgoing copy preserves the current sector-axis mask, while the second outgoing copy uses a one-direction mask.

\begin{figure}[H]
	\centering
	\begin{subfigure}{0.48\linewidth}
		\centering
		\textbf{(A)}\\[2pt]
		\includegraphics[width=\linewidth]{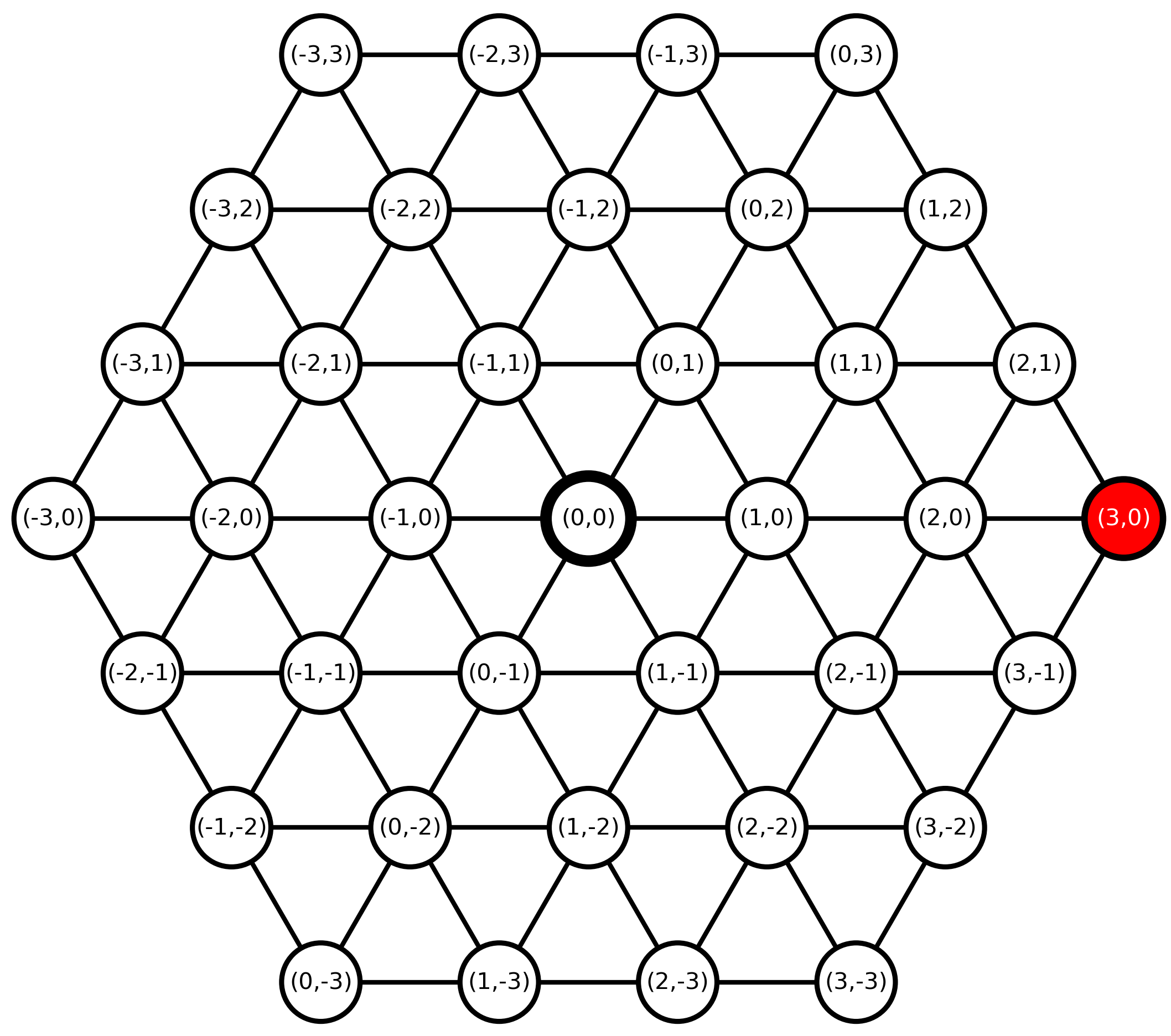}
	\end{subfigure}
	\hfill
	\begin{subfigure}{0.48\linewidth}
		\centering
		\textbf{(B)}\\[2pt]
		\includegraphics[width=\linewidth]{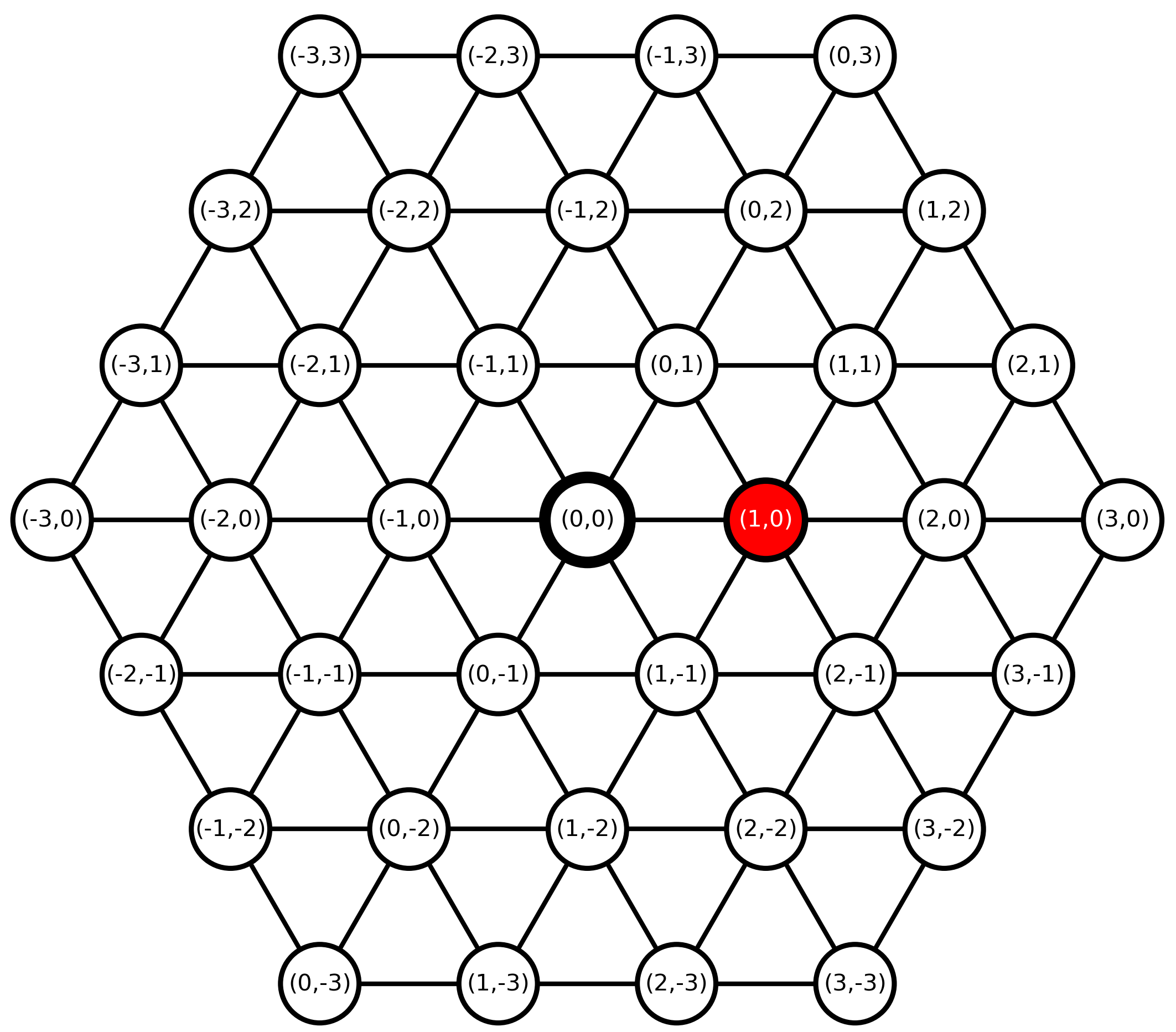}
	\end{subfigure}
	\caption{Fault classification in the dense Eisenstein--Jacobi network \(H_4\) with diameter \(t=3\), rooted at the source node \((0,0)\). (A) A faulty node located on the graph-distance-\(t\) boundary. (B) A faulty node located at graph distance less than \(t\).}
	\label{fig:ej_fault_classification}
\end{figure}

The algorithm is independent of the identity of the source node. If the source is \(S\) instead of \(0\), the same forwarding structure is obtained by translating all nodes by \(S\).

The broadcast completes in exactly \(t=n-1\) parallel steps. At step \(j\), for \(1\leq j\leq t\), the broadcast reaches the nodes at graph distance \(j\) from the source. Since the dense EJ network has \(6j\) nodes at graph distance \(j\), the number of newly reached nodes at each step grows linearly with \(j\). At the final step, the broadcast reaches the boundary nodes at distance \(t\). These boundary nodes are leaves of the broadcast tree.

The total number of non-source nodes is
\begin{equation}
	N-1=3n^2-3n=3t^2+3t.
\end{equation}
Also,
\begin{equation}
	\sum_{j=1}^{t}6j=3t(t+1)=3t^2+3t=N-1.
\end{equation}
Therefore, the broadcast reaches every non-source node exactly once. Since no duplicate messages are generated, the broadcast uses exactly \(N-1\) tree edges. This is optimal for one-to-all broadcasting because any broadcast tree spanning all \(N\) nodes must use at least \(N-1\) edges.

The communication time of the fault-free one-to-all broadcast is therefore
\begin{equation}
	B_{\mathrm{baseline}}(t)=t.
\end{equation}
Equivalently, in terms of \(n\),
\begin{equation}
	B_{\mathrm{baseline}}(n)=n-1.
\end{equation}

This baseline broadcast algorithm is efficient but not fault tolerant. If a faulty node occurs on the distance-\(t\) boundary of the source, the broadcast remains correct because the faulty node is a leaf-level node and does not need to forward the message. However, if a faulty node occurs at graph distance less than \(t\), it may be an internal forwarding node. In that case, packets that should pass through that node may not reach all downstream nodes in the corresponding sector. The proposed method addresses this problem by relocating the effective source before starting the broadcast so that the faulty nodes become distance-\(t\) boundary nodes with respect to the new source.

\section{Proposed Re-Rooting Method}
\label{sec:method}

We now present the proposed fault-tolerant broadcasting method. The method assumes a static node-failure model: the faulty nodes are known before the broadcast begins and do not change during propagation. The original source node is assumed to be operational.

\subsection{Fault Classification}
\label{subsec:fault_classification}

Let \(S\) be the original broadcast source in the dense EJ network \(H_n\), and let
\begin{equation}
	t=n-1
\end{equation}
be the network diameter. A faulty node \(F\) can be classified according to its graph distance from the source \(S\).

The first case occurs when
\begin{equation}
	d(S,F)=t.
\end{equation}
In this case, the faulty node lies on the graph-distance-\(t\) boundary of the source. Under the standard EJ one-to-all broadcast, boundary nodes are leaf-level nodes. They may receive the broadcast message, but they are not required to forward it to any other node. Therefore, a faulty node at distance \(t\) from the source does not interrupt the propagation of the broadcast tree.

The second case occurs when
\begin{equation}
	d(S,F)<t.
\end{equation}
In this case, the faulty node may lie at an internal forwarding position of the broadcast tree. If such a node fails to forward the message, then all downstream nodes that depend on that forwarding step may not receive the broadcast. Therefore, faults at distance less than \(t\) can disrupt the correctness of the original one-to-all broadcast.

For multiple faulty nodes, the same classification applies to each faulty node. If all faulty nodes are located at distance \(t\) from the source, then the original broadcast can proceed without re-rooting. However, if at least one faulty node is located at distance less than \(t\), then the broadcast source must be relocated to a new source node \(NS\). The goal of re-rooting is to choose \(NS\) so that every faulty node becomes a graph-distance-\(t\) boundary node with respect to \(NS\).

Thus, for a faulty set \(F=\{F_1,F_2,\ldots,F_m\}\), the desired re-rooted source satisfies
\begin{equation}
	d(NS,F_i)=t,\qquad 1\leq i\leq m.
	\label{eq:reroot_condition_general}
\end{equation}
When this condition holds, all faulty nodes are leaf-level nodes in the broadcast tree rooted at \(NS\), and none of them is required to forward the broadcast message.

Figure~\ref{fig:ej_fault_classification} illustrates the two fault categories. The proposed re-rooting method addresses the disruptive case by selecting a new source \(NS\) that moves the faulty nodes to the graph-distance-\(t\) boundary.

\subsection{Main Idea}
Let $S$ be the original source and let $F$ be the set of faulty nodes. If every faulty node is already at graph distance $t$ from $S$, then the standard broadcast can proceed from $S$ because all faulty nodes are leaf-level nodes. Otherwise, the method selects a new source node $NS$ such that
\begin{equation}
d(NS,f)=t,\quad \text{for every } f\in F.
\label{eq:ns_condition}
\end{equation}
The message is then routed from $S$ to $NS$, and the standard one-to-all broadcast is initiated from $NS$.

Since every faulty node is located on the graph-distance-$t$ boundary with respect to $NS$, faulty nodes are reached only at the final broadcast step and are not required to forward the message. Thus, the original broadcast algorithm is preserved.

\subsection{Single-Fault Case}
For a single faulty node $F_1$, a valid new source can be chosen from the boundary
\begin{equation}
B_t(F_1)=F_1+B_t.
\end{equation}
Any node in this set is at graph distance $t$ from $F_1$. If the selected node is faulty or unsuitable for implementation reasons, another boundary node can be chosen.

\begin{algorithm}[H]
	\caption{OneFailureFindingNS}
	\label{alg:one_failure_ns}
	\begin{algorithmic}[1]
		\REQUIRE Faulty node \(F_1\), network parameter \(n\), diameter \(t=n-1\)
		\ENSURE New source node \(NS\)
		\STATE Generate the graph-distance-\(t\) boundary set \(B_t\)
		\FOR{each boundary node \(U\in B_t\)}
		\STATE \(NS\leftarrow F_1+U \pmod{\alpha}\)
		\IF{\(NS\) is operational}
		\STATE \textbf{return} \(NS\)
		\ENDIF
		\ENDFOR
	\end{algorithmic}
\end{algorithm}

Algorithm~\ref{alg:one_failure_ns} gives a simple source-selection method that scans the distance-\(t\) boundary \(B_t\), which contains exactly \(6t\) nodes. Therefore, the worst-case running time is
\begin{equation}
	T_1(t)=O(6t)=O(t).
\end{equation}

\subsection{Two-Fault Case}
For two faulty nodes $F_1$ and $F_2$, the goal is to find a node $NS$ satisfying
\begin{equation}
d(F_1,NS)=d(F_2,NS)=t.
\label{eq:two_fault_condition}
\end{equation}
Equivalently,
\begin{equation}
NS\in B_t(F_1)\cap B_t(F_2).
\label{eq:boundary_intersection}
\end{equation}
The existence of such a node for every pair of faulty nodes is proved in Section~\ref{sec:proof}.

\begin{algorithm}[H]
	\caption{TwoFailureFindingNS}
	\label{alg:two_failure_ns}
	\begin{algorithmic}[1]
		\REQUIRE Faulty nodes \(F_1,F_2\), network parameter \(n\), diameter \(t=n-1\)
		\ENSURE New source node \(NS\)
		\STATE Generate the graph-distance-\(t\) boundary set \(B_t\)
		\STATE \(A\leftarrow F_2-F_1 \pmod{\alpha}\)
		\FOR{each boundary node \(U\in B_t\)}
		\STATE \(V\leftarrow U-A \pmod{\alpha}\)
		\IF{\(V\in B_t\)}
		\STATE \(NS\leftarrow F_1+U \pmod{\alpha}\)
		\IF{\(NS\) is operational}
		\STATE \textbf{return} \(NS\)
		\ENDIF
		\ENDIF
		\ENDFOR
	\end{algorithmic}
\end{algorithm}

Algorithm~\ref{alg:two_failure_ns} uses translation symmetry. First, it reduces the problem to finding a boundary node for the relative difference
\begin{equation}
	A=F_2-F_1.
\end{equation}
Then it translates the solution back to the original position. Since \(|B_t|=6t\), and membership in \(B_t\) can be checked in constant time using the distance formula or a precomputed boundary set, the worst-case running time is
\begin{equation}
	T_2(t)=O(6t)=O(t).
\end{equation}

\subsection{Worked Example}
\label{subsec:worked_example}

We now give a concrete example to illustrate the two-fault re-rooting idea. Consider the dense EJ network \(H_4\), generated by
\begin{equation}
	\alpha=4+3\omega.
\end{equation}
In this case,
\begin{equation}
	n=4,\qquad t=n-1=3,\qquad N=3n^2-3n+1=37.
\end{equation}

Assume that the two faulty nodes are
\begin{equation}
	F_1=(0,0)
\end{equation}
and
\begin{equation}
	F_2=(1,0).
\end{equation}
Using the integer-labeling function
\begin{equation}
	\phi(x+y\omega)\equiv 3x+7y \pmod{37},
\end{equation}
these nodes correspond to
\begin{equation}
	\phi(F_1)=0
\end{equation}
and
\begin{equation}
	\phi(F_2)=3.
\end{equation}

The goal is to find a new source node \(NS\) satisfying
\begin{equation}
	d(NS,F_1)=d(NS,F_2)=3.
\end{equation}
One valid choice is
\begin{equation}
	NS=(-3,0).
\end{equation}
For the first faulty node,
\begin{align}
	d(F_1,NS)
	&=D(-3,0)\\
	&=\max\{|-3|,\ |0|,\ |-3+0|\}\\
	&=3.
\end{align}

For the second faulty node, the displacement from \(F_2\) to \(NS\) is
\begin{equation}
	NS-F_2=(-3,0)-(1,0)=(-4,0).
\end{equation}
Modulo \(\alpha=4+3\omega\), this displacement is equivalent to
\begin{equation}
	(-4,0)\equiv (0,-3)\pmod{4+3\omega}.
\end{equation}
Therefore,
\begin{align}
	d(F_2,NS)
	&=D(0,-3)\\
	&=\max\{|0|,\ |-3|,\ |0-3|\}\\
	&=3.
\end{align}
Hence,
\begin{equation}
	d(F_1,NS)=d(F_2,NS)=3.
\end{equation}

In integer labels, the selected new source is
\begin{equation}
	\phi(-3,0)\equiv 3(-3)+7(0)\equiv -9\equiv 28\pmod{37}.
\end{equation}
Thus, the same example can be written as
\begin{equation}
	F_1=0,\qquad F_2=3,\qquad NS=28.
\end{equation}
The node \(28\) is at graph distance \(3\) from both faulty nodes \(0\) and \(3\).

\begin{figure}[H]
	\centering
	\includegraphics[width=0.6\linewidth]{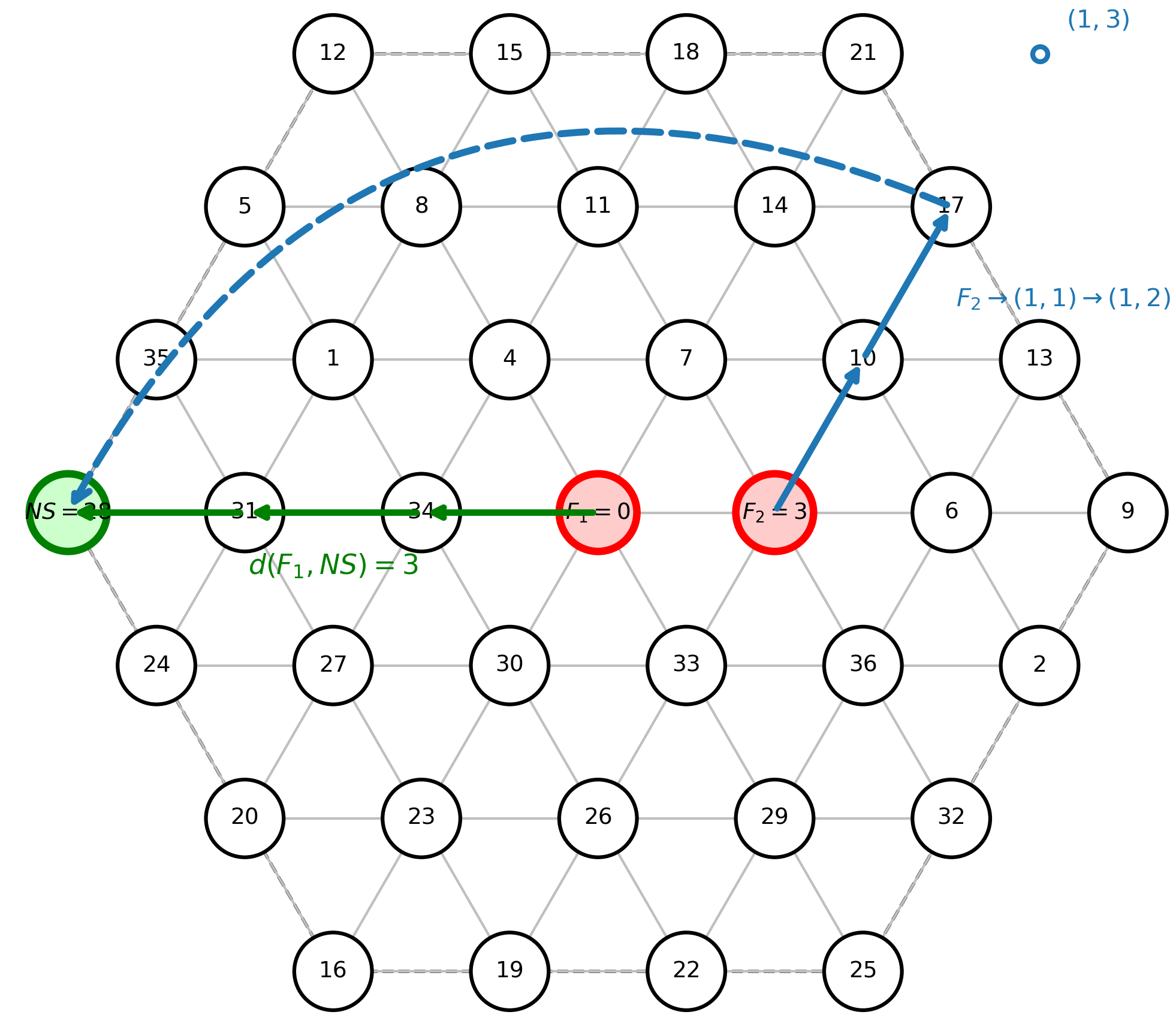}
	\caption{Worked example of two-fault re-rooting in \(H_4\). The faulty nodes are \(F_1=0\) and \(F_2=3\), and one valid new source is \(NS=28\). The path from \(F_2\) to \(NS\) uses the wrap-around relation \((-4,0)\equiv(0,-3)\pmod{4+3\omega}\), showing that \(d(F_2,NS)=3\). Thus, both faulty nodes are at graph distance \(t=3\) from the new source.}
	\label{fig:two_fault_example}
\end{figure}

After \(NS\) is selected, the message is first routed from the original source \(S\) to \(NS\). Then, the standard one-to-all EJ broadcast is executed from \(NS\). Since both faulty nodes are now located on the graph-distance-\(3\) boundary of \(NS\), they are leaf-level nodes and do not participate in forwarding.

\section{Existence Proof and Theoretical Boundary}
\label{sec:proof}

This section proves the main theoretical guarantee for two faulty nodes and then shows why the guarantee cannot be extended to all three-fault configurations.

\subsection{Boundary-Difference Coverage}
\label{subsec:boundary_difference}

The key step in the two-fault re-rooting argument is to show that every possible relative displacement between two faulty nodes can be represented as the difference of two boundary nodes. This property allows us to translate a two-fault problem into a common-boundary problem.

Let
\begin{equation}
	B_t=\{z\in H_n\mid d(0,z)=t\}
\end{equation}
denote the graph-distance-\(t\) boundary around node \(0\), where
\begin{equation}
	t=n-1
\end{equation}
is the diameter of \(H_n\). Since \(H_n\) is vertex-transitive, the graph-distance-\(t\) boundary around any node is obtained by translating \(B_t\).

For an EJ coordinate \(z=x+y\omega\), the distance from the origin in the hexagonal coordinate representation is
\begin{equation}
	D(x,y)=\max\{|x|,\ |y|,\ |x+y|\}.
	\label{eq:ej_hex_distance}
\end{equation}
Thus, the closed hexagonal ball of radius \(t\) is
\begin{equation}
	\mathcal{H}_t=\{x+y\omega\mid D(x,y)\leq t\}.
\end{equation}
The boundary of this ball is
\begin{equation}
	B_t=\{x+y\omega\mid D(x,y)=t\}.
\end{equation}

\begin{lemma}[Boundary-Difference Coverage]
	\label{lem:boundary_difference}
	For the dense Eisenstein--Jacobi network \(H_n\) generated by
	\begin{equation}
		\alpha=n+(n-1)\omega,
	\end{equation}
	with diameter \(t=n-1\), the graph-distance-\(t\) boundary satisfies
	\begin{equation}
		B_t-B_t=H_n.
		\label{eq:boundary_difference_coverage}
	\end{equation}
	That is, for every node \(A\in H_n\), there exist two boundary nodes \(U,V\in B_t\) such that
	\begin{equation}
		A=U-V.
	\end{equation}
\end{lemma}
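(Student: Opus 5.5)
The plan is to reduce the statement to an explicit construction in the infinite triangular lattice, and then transfer it to \(H_n\) through the hexagonal-ball representatives.

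\textbf{Step 1: reduction to the ball \(\mathcal{H}_t\).} Since \(N=3t^2+3t+1=|\mathcal{H}_t|\) and \(\sum_{j=1}^t 6j=N-1\), the ball \(\mathcal{H}_t\) is a complete residue system modulo \(\alpha\). I will take this from the paper's distance description (Section~\ref{sec:background}). Moreover, for a representative \(x+y\omega\in\mathcal{H}_t\), the graph distance from \(0\) in \(H_n\) equals \(D(x,y)\). This is exactly what makes \(B_t\) the set of classes whose ball representative has \(D=t\).

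It therefore suffices to prove the following. For every \(A=x+y\omega\in\mathcal{H}_t\) there are lattice points \(U,V\) with
\[
D(U)=D(V)=t \quad\text{and}\quad U-V=A
\]
holding exactly in \(\mathbb{Z}[\omega]\). Reducing modulo \(\alpha\) then gives \(U,V\in B_t\) with \(U-V\equiv A\). Note that \(U\) and \(V\) automatically lie in \(\mathcal{H}_t\), so no wrap-around argument is needed.

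\textbf{Step 2: symmetry reduction.} Multiplication by a unit \(\omega^k\) is a rotation by \(60^\circ\). It preserves \(D\), so it maps the hexagon \(\{D=t\}\) to itself. If \(A=U-V\), then \(\omega^kA=\omega^kU-\omega^kV\). The six rotations of the closed sector \(\{x\ge 0,\ y\ge 0\}\) cover the plane. Hence I may assume \(x,y\ge 0\) with \(x+y=D(A)\le t\). This covers \(A=0\) as well.

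\textbf{Step 3: explicit construction.} In this sector, set
\[
U=(x-t)+t\omega \quad\text{and}\quad V=U-A=-t+(t-y)\omega.
\]
For \(U\), since \(0\le x\le t\),
\[
D(U)=\max\{|x-t|,\ t,\ x\}=t.
\]
For \(V\), since \(0\le y\le t\),
\[
D(V)=\max\{t,\ |t-y|,\ y\}=t.
\]
Thus both points lie on the radius-\(t\) hexagon and their difference is \(A\). Rotating back handles the other five sectors, and Step 1 transfers the identity to \(H_n\). This gives \(B_t-B_t\supseteq H_n\). The reverse inclusion is trivial.

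\textbf{Main obstacle.} The delicate point is Step 1: justifying that graph distance in \(H_n\) coincides with \(D\) on the ball representatives. Wrap-around could in principle shorten a distance, and ball representatives could in principle collide modulo \(\alpha\). I would rely on the paper's stated facts that \(H_n\) has diameter \(t\) and exactly \(6j\) nodes at distance \(j\). These force the \(N\) points of \(\mathcal{H}_t\) to be pairwise incongruent and each to realize its distance \(D\). Once this is granted, the rest is the short computation above.
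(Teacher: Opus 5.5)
Your proposal is correct and follows the paper's proof essentially verbatim. Both arguments take a representative of $A$ in the hexagonal ball $\mathcal{H}_t$, rotate it into the sector $x,y\ge 0$, and use the same witnesses $U=(x-t)+t\omega$ and $V=-t+(t-y)\omega$. Your Step~1 is slightly more careful than the paper: you explicitly justify, via diameter $t$ and $|\mathcal{H}_t|=N$, that lattice points with $D=t$ really are at graph distance $t$ in $H_n$, a fact the paper uses only implicitly when it concludes $U,V\in B_t$ from $D(U)=D(V)=t$.
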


\begin{proof}
	Let \(A\in H_n\) be arbitrary. Because \(H_n\) is the dense EJ network generated by \(\alpha=n+(n-1)\omega\), every node has a representative inside the closed hexagonal ball \(\mathcal{H}_t\). This representative can be chosen by reducing the EJ coordinate modulo \(\alpha\) to the canonical diameter-\(t\) region; equivalently, since \(H_n\) has diameter \(t\), each residue class contains a coordinate representative at graph distance at most \(t\) from the origin. Therefore, we may write
	\begin{equation}
		A=x+y\omega
	\end{equation}
	with
	\begin{equation}
		D(x,y)\leq t.
	\end{equation}
	It is sufficient to prove that every point in \(\mathcal{H}_t\) can be written as the difference of two points on the boundary \(B_t\).
	
	The hexagonal ball \(\mathcal{H}_t\) is divided into six sectors corresponding to the six EJ directions
	\begin{equation}
		1,\ \omega,\ \omega^2,\ -1,\ -\omega,\ -\omega^2.
	\end{equation}
	We first prove the construction explicitly for the first sector, and then extend it to the remaining five sectors by rotational symmetry.
	
	Assume first that \(A\) lies in the first sector. Then
	\begin{equation}
		A=r+s\omega,
	\end{equation}
	where
	\begin{equation}
		r\geq 0,\qquad s\geq 0,\qquad r+s\leq t.
		\label{eq:first_sector_conditions}
	\end{equation}
	Define two nodes
	\begin{equation}
		U=(r-t)+t\omega
		\label{eq:U_first_sector}
	\end{equation}
	and
	\begin{equation}
		V=-t+(t-s)\omega.
		\label{eq:V_first_sector}
	\end{equation}
	Then
	\begin{align}
		U-V
		&=\bigl((r-t)+t\omega\bigr)-\bigl(-t+(t-s)\omega\bigr)\\
		&=(r-t+t)+\bigl(t-(t-s)\bigr)\omega\\
		&=r+s\omega\\
		&=A.
	\end{align}
	Thus, \(A\) is the difference of \(U\) and \(V\). It remains to show that both \(U\) and \(V\) are boundary nodes.
	
	For \(U\), the coordinate pair is
	\begin{equation}
		(r-t,\ t).
	\end{equation}
	Using \eqref{eq:ej_hex_distance}, we obtain
	\begin{align}
		D(U)
		&=\max\{|r-t|,\ |t|,\ |(r-t)+t|\}\\
		&=\max\{t-r,\ t,\ r\}\\
		&=t,
	\end{align}
	because \(0\leq r\leq t\). Hence,
	\begin{equation}
		U\in B_t.
	\end{equation}
	
	For \(V\), the coordinate pair is
	\begin{equation}
		(-t,\ t-s).
	\end{equation}
	Again using \eqref{eq:ej_hex_distance}, we obtain
	\begin{align}
		D(V)
		&=\max\{|-t|,\ |t-s|,\ |-t+(t-s)|\}\\
		&=\max\{t,\ t-s,\ s\}\\
		&=t,
	\end{align}
	because \(0\leq s\leq t\). Hence,
	\begin{equation}
		V\in B_t.
	\end{equation}
	
	Therefore, every node \(A\) in the first sector can be expressed as
	\begin{equation}
		A=U-V
	\end{equation}
	for some \(U,V\in B_t\).
	
	Now consider a node \(A\) in any other sector. The remaining sectors are obtained from the first sector by multiplication by one of the six EJ units
	\begin{equation}
		1,\ \omega,\ \omega^2,\ -1,\ -\omega,\ -\omega^2.
	\end{equation}
	Multiplication by any EJ unit is a graph automorphism of \(H_n\). It preserves adjacency, graph distance, and the distance-\(t\) boundary. Therefore, if a rotated version of \(A\), denoted by \(A'\), lies in the first sector and has a decomposition
	\begin{equation}
		A'=U'-V'
	\end{equation}
	with
	\begin{equation}
		U',V'\in B_t,
	\end{equation}
	then rotating back gives
	\begin{equation}
		A=U-V
	\end{equation}
	with
	\begin{equation}
		U,V\in B_t.
	\end{equation}
	
	Thus, every node of \(H_n\) can be represented as the difference of two graph-distance-\(t\) boundary nodes. Hence,
	\begin{equation}
		B_t-B_t=H_n.
	\end{equation}
\end{proof}

\subsection{Two-Fault Re-Rooting Theorem}

By Lemma~\ref{lem:boundary_difference}, every relative displacement in \(H_n\) can be written as the difference of two boundary nodes.

\begin{theorem}[Existence of a Common Distance-$t$ New Source]
\label{thm:twofault}
Let $F_1$ and $F_2$ be any two faulty nodes in the dense EJ network $H_n$ with diameter $t=n-1$. Then there exists a node $NS\in H_n$ such that
\begin{equation}
d(F_1,NS)=t
\end{equation}
and
\begin{equation}
d(F_2,NS)=t.
\end{equation}
\end{theorem}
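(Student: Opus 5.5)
The plan is to reduce the statement to Lemma~\ref{lem:boundary_difference} by translation. Set the relative displacement \(A=F_2-F_1 \pmod{\alpha}\), which is a node of \(H_n\). The lemma gives boundary nodes \(U,V\in B_t\) with \(A=U-V\). We then define the candidate new source as
\begin{equation}
NS=F_1+U \pmod{\alpha}.
\end{equation}
This is exactly the node produced by Algorithm~\ref{alg:two_failure_ns}.

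The verification has two parts, and both use the translation invariance of graph distance in \(H_n\). By vertex transitivity, \(d(u,v)=d(0,v-u)\) for all \(u,v\in H_n\); this is the content of \eqref{eq:boundary_translate}.

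\textbf{Distance from \(F_1\).} We have \(NS-F_1=U\in B_t\), so \(d(F_1,NS)=t\).

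\textbf{Distance from \(F_2\).} We have
\begin{equation}
NS-F_2=F_1+U-F_2=U-A=U-(U-V)=V,
\end{equation}
and \(V\in B_t\), so \(d(F_2,NS)=t\).

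A symmetric point must be checked: membership in \(B_t\) should not depend on sign. This holds because negation \(z\mapsto -z\) is multiplication by the unit \(-1\), which is an automorphism fixing \(0\) and hence preserves \(B_t\). With that, the conclusion holds in either orientation of the differences.

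The main care point is ensuring the residue arithmetic is legitimate. All identities are taken in \(\mathbb{Z}[\omega]/(\alpha)\), and the distance \(d\) on \(H_n\) is the quotient distance. The boundary \(B_t\) used in the lemma consists of residue classes whose minimal representative has \(D=t\). Since translation by a fixed element is a graph automorphism of the Cayley graph \(H_n\), it preserves \(d\) exactly, so no issue with representatives arises. The degenerate case \(F_1=F_2\) is covered automatically: there \(A=0=U-U\), and this reduces to the single-fault case.
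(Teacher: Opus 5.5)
Your proof is correct and follows the paper's argument essentially verbatim: set $A=F_2-F_1 \pmod{\alpha}$, apply the boundary-difference lemma to write $A=U-V$ with $U,V\in B_t$, take $NS=F_1+U$, and use translation invariance to get $NS-F_1=U$ and $NS-F_2=V$. Your extra remarks on sign symmetry, on translations being automorphisms of the Cayley graph so that distances are well defined modulo $\alpha$, and on the case $F_1=F_2$ are valid details that the paper leaves implicit.
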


\begin{proof}
Let
\begin{equation}
A=F_2-F_1 \pmod{\alpha}.
\end{equation}
By Lemma~\ref{lem:boundary_difference}, there exist $U,V\in B_t$ such that
\begin{equation}
A=U-V \pmod{\alpha}.
\end{equation}
Choose
\begin{equation}
NS=F_1+U \pmod{\alpha}.
\end{equation}
Since $U\in B_t$, translation invariance gives
\begin{equation}
d(F_1,NS)=d(0,U)=t.
\end{equation}
Also,
\begin{align}
NS-F_2
&=F_1+U-F_2 \pmod{\alpha}\\
&=U-(F_2-F_1) \pmod{\alpha}\\
&=U-A \pmod{\alpha}\\
&=V \pmod{\alpha}.
\end{align}
Since $V\in B_t$, we obtain
\begin{equation}
d(F_2,NS)=d(0,V)=t.
\end{equation}
Therefore, $NS$ is a common graph-distance-$t$ node for $F_1$ and $F_2$.
\end{proof}

\begin{corollary}
For any one-node or two-node static fault configuration in $H_n$, the proposed re-rooting method can select a new source node such that all faulty nodes are leaf-level nodes in the subsequent one-to-all broadcast.
\end{corollary}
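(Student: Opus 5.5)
The corollary follows by combining the existence results for one and two faults with the leaf-level property of the standard broadcast tree. I would split into three cases according to the number and placement of faults.

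\emph{Case 0 (no re-rooting needed).} Suppose every faulty node already satisfies $d(S,F_i)=t$. Then the original source $S$ works unchanged. The reason is the property recorded in Section~\ref{sec:ej_broadcasting}: the distance field reaches zero exactly at the graph-distance-$t$ boundary, so boundary nodes consume the packet but never forward it.

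\emph{Case 1 (one fault $F_1$).} Any node of $B_t(F_1)=F_1+B_t$ satisfies \eqref{eq:reroot_condition_general}. This set is nonempty because $|B_t|=6t\ge 6$ for $n\ge 2$. Algorithm~\ref{alg:one_failure_ns} returns such a node. That node is not $F_1$ itself, since it lies at distance $t\ge 1$ from $F_1$. By assumption it is the only candidate excluded by the operational test, so the algorithm succeeds.

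\emph{Case 2 (two faults $F_1,F_2$).} Theorem~\ref{thm:twofault} supplies $NS=F_1+U$ with $U,V\in B_t$ and $NS-F_2=V$, so $d(F_1,NS)=d(F_2,NS)=t$. Moreover, $NS\neq F_1,F_2$, so $NS$ is operational. Algorithm~\ref{alg:two_failure_ns} scans all $U\in B_t$ and tests $U-A\in B_t$. It therefore finds at least the $U$ furnished by Lemma~\ref{lem:boundary_difference}, and the operational check never rejects that candidate.

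The step that needs care is the transfer from ``$d(NS,F_i)=t$'' to ``$F_i$ is a leaf in the tree rooted at $NS$.'' I would argue it by translation invariance: the forwarding structure rooted at $NS$ is the tree rooted at $0$ shifted by $NS$. Under that shift, $F_i$ corresponds to $F_i-NS$, which is at distance $t$ from $0$. A node at distance $t$ from the root receives the packet with $dist=0$ and forwards nothing, so it is a leaf.

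Finally, I would check that every non-faulty node still receives the message. Each node at distance $j<t$ from $NS$ is reached via a tree path all of whose nodes are at distance less than $t$. Since the faults sit only at distance exactly $t$, these paths avoid them, so the faults neither lie on nor block any forwarding path. One further gap is that the message must first be routed from $S$ to $NS$. I would handle this by choosing a shortest path that avoids the two faults. Such a path exists because $H_n$ is $6$-connected, so removing two vertices cannot disconnect it.
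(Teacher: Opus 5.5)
Your proof is correct and is essentially the paper's argument. The paper leaves the corollary implicit as a direct consequence of three facts: the single-fault choice $NS\in F_1+B_t$, Theorem~\ref{thm:twofault}, and the observation from Section~\ref{sec:ej_broadcasting} that distance-$t$ nodes are reached last and never forward. Your operational-check and translation-invariance details make that reasoning explicit.

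The one misstep is in your relocation remark, which the corollary does not need. $6$-connectivity only gives some fault-avoiding $S$--$NS$ path, not a shortest one. The right reason, given in Section~\ref{sec:complexity}, is that every internal vertex $w$ of a shortest $S$--$NS$ path satisfies $d(w,NS)<d(S,NS)\le t$. Hence every geodesic automatically avoids faults at distance $t$ from $NS$, and relocation stays within $t$ hops.
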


\subsection{Three-Fault Limitation}
The previous theorem gives a deterministic guarantee for any pair of faulty nodes. However, the same guarantee does not hold for arbitrary triples of faulty nodes.

\begin{proposition}[Three-Fault Counterexample]
\label{prop:threefault}
The two-fault re-rooting guarantee does not generally extend to arbitrary three-node fault configurations in dense EJ networks.
\end{proposition}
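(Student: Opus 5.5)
The plan is to refute the three-fault analogue of Theorem~\ref{thm:twofault} with one explicit small instance, using a covering reformulation. By translation invariance, a node $NS$ fails to be a valid re-rooted source for a faulty set $\{F_1,F_2,F_3\}$ exactly when $d(NS,F_i)\le t-1$ for some $i$. Hence no valid $NS$ exists if and only if the three closed balls of radius $t-1$ centred at the faults cover all of $H_n$:
\begin{equation*}
	\bigcup_{i=1}^{3}\bigl(F_i+\mathcal{H}_{t-1}\bigr)=H_n .
\end{equation*}
So I would exhibit a triple of balls of radius $t-1$ that covers the network. Each such ball has $3(t-1)^2+3(t-1)+1=3t^2-3t+1$ nodes, and $N=3t^2+3t+1$.

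First I would discard $H_2$ ($t=1$). There $N=7$ and every node has degree six, so $H_2\cong K_7$. Any fourth node is then adjacent to all three faults, and no counterexample exists.

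The first nontrivial case is $H_3$ ($t=2$, $N=19$, $\alpha=3+2\omega$). Three radius-$1$ balls contain $21$ nodes counted with multiplicity, which leaves room for only two overlaps. Since overlaps must be minimised, the natural choice is three faults that are pairwise at distance exactly $t=2$ and arranged symmetrically. I would take
\begin{equation*}
	F_1=(0,0),\qquad F_2=(1,1),\qquad F_3=(-2,1).
\end{equation*}
Using \eqref{eq:ej_hex_distance}, $D(1,1)=2$ and $D(-2,1)=2$. Also $F_3-F_2=(-3,0)\equiv(0,2)\pmod{\alpha}$, so $D(0,2)=2$. In integer labels, via $\phi(x+y\omega)\equiv 2x+5y\pmod{19}$, the faults are $0$, $7$ and $1$. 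This lets each radius-$1$ ball be written simply as $\{c,\ c\pm2,\ c\pm3,\ c\pm5\}\bmod 19$.

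The key step is then a finite check that these three balls cover $\mathbb{Z}_{19}$:
\begin{itemize}
\item $F_1=0$ gives $\{0,2,3,5,14,16,17\}$;
\item $F_3=1$ gives $\{1,3,4,6,15,17,18\}$;
\item $F_2=7$ gives $\{7,9,10,12,2,4,5\}$.
\end{itemize}
Their union misses $8$, $11$ and $13$. So this particular triple does not work, and I expect finding the right triple to be the main obstacle.

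For the repair I would abandon the integer labels and argue geometrically in $H_3$. I would place the three centres so that their radius-$1$ hexagons tile the $19$-node torus with exactly two overlaps. Failing that, I would run an exhaustive search over triples with $F_1=0$, which is justified by vertex-transitivity and takes at most $\binom{18}{2}$ checks. That search either produces a covering triple, which is the desired counterexample, or shows that $H_3$ has none.

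In the latter case I would move to $H_4$ ($t=3$, $N=37$). There each radius-$2$ ball has $19$ nodes, giving $57$ counted with multiplicity, so the covering condition has much more slack. A symmetric triple such as $\{0,\ 2+\omega,\ \dots\}$ rotated by $\omega^2$ should cover, and I would verify it by the same enumeration of the union of the balls.

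Once a covering triple is found, the proposition follows immediately. Every candidate $NS$ lies within distance $t-1$ of some fault, so that fault is an internal forwarding position of the broadcast tree rooted at $NS$, and Algorithm~\ref{alg:two_failure_ns}'s approach cannot extend to this configuration.
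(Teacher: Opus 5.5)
Your covering reformulation is correct. It is the complement of the paper's own check: since $H_n$ has diameter $t$, no valid $NS$ exists exactly when $\bigcup_i (F_i+\mathcal{H}_{t-1})=H_n$, i.e.\ when $B_t(F_1)\cap B_t(F_2)\cap B_t(F_3)=\emptyset$. The gap is that this proposition is an existence claim whose whole content is a single verified witness, and you never produce one. The proposal ends with a search plan and an untested guess, and both branches of that plan are problematic.

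\textbf{The $H_3$ branch cannot succeed.} In $H_3$, two radius-$1$ balls overlap as follows:
\begin{itemize}
\item if the centres are adjacent, the balls share $4$ nodes;
\item if the centres differ by $2u$ with $u$ a unit, they share only the midpoint $F_i+u$;
\item for any other distance-$2$ difference, they share $2$ nodes.
\end{itemize}
The union has size $21$ minus the pairwise overlaps plus the triple overlap, and the triple overlap is at most the smallest pairwise one. So reaching $19$ forces all three pairs to be of the $2u$ type with a common triple point. But the midpoints $F_1+u$ and $F_1+v$ differ unless $F_2=F_3$. Hence $H_3$ has no three-fault counterexample, and your exhaustive search there is guaranteed to come back empty.

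\textbf{The $H_4$ guess fails under its natural readings.} Use $\phi(x+y\omega)\equiv 3x+7y \pmod{37}$.
\begin{itemize}
\item $2+\omega\mapsto 13$ and $\omega^2(2+\omega)=-3+2\omega\mapsto 5$. Node $18=-1+3\omega$ is at distance $3$ from each of $0$, $5$ and $13$, since $18$, $18-5=13$ and $18-13=5$ all lie in $B_3(0)$.
\item The equilateral alternatives $\{0,13,18\}$ and $\{0,13,32\}$ are likewise rescued, by $NS=35$ and $NS=15$ respectively.
\end{itemize}
The count $57>37$ gives slack, but it guarantees nothing about whether any particular triple covers.

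\textbf{What is missing is a concrete, checked triple.} The paper takes $H_4$ with faults $0,5,14$ in integer labels, i.e.\ $(0,0)$, $(-3,2)$, $(0,2)$. It lists $B_3(0)$ and its shifts by $5$ and $14$ modulo $37$, and checks directly that the three-way intersection is empty. In your language, the radius-$2$ balls around $0$, $5$ and $14$ together cover all of $\mathbb{Z}_{37}$. Once you insert that witness and the finite verification, your argument coincides with the paper's.
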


\begin{proof}
Consider $H_4$, for which
\begin{equation}
t=n-1=3
\end{equation}
and
\begin{equation}
N=3n^2-3n+1=37.
\end{equation}
Using the circulant representation with jumps
\begin{equation}
3,\quad 4,\quad 7,
\end{equation}
the graph-distance-$3$ boundary of node $0$ is
\begin{align}
B_3(0)=\{&2,5,9,12,13,15,16,17,18,19,20,21,\nonumber\\
&22,24,25,28,32,35\}.
\end{align}
Now choose three faulty nodes with integer labels
\begin{equation}
F_1=0,\quad F_2=5,\quad F_3=14.
\end{equation}
By vertex transitivity, the graph-distance-$3$ boundary of node $F$ is obtained by shifting $B_3(0)$ by $F$ modulo $37$. Thus,
\begin{align}
B_3(F_1)=\{&2,5,9,12,13,15,16,17,18,19,20,21,\nonumber\\
&22,24,25,28,32,35\},
\end{align}
\begin{align}
B_3(F_2)=\{&0,3,7,10,14,17,18,20,21,22,23,24,\nonumber\\
&25,26,27,29,30,33\},
\end{align}
and
\begin{align}
B_3(F_3)=\{&1,2,5,9,12,16,19,23,26,27,29,30,\nonumber\\
&31,32,33,34,35,36\}.
\end{align}
A valid re-rooted source for all three faulty nodes would need to lie in
\begin{equation}
B_3(F_1)\cap B_3(F_2)\cap B_3(F_3).
\end{equation}
However, direct intersection gives
\begin{equation}
B_3(F_1)\cap B_3(F_2)\cap B_3(F_3)=\emptyset.
\end{equation}
Therefore, no node $NS$ satisfies
\begin{equation}
d(NS,F_1)=d(NS,F_2)=d(NS,F_3)=3.
\end{equation}
This proves that the deterministic guarantee for two faulty nodes cannot be extended to arbitrary triples.
\end{proof}

\section{Complexity and Communication Cost}
\label{sec:complexity}

The proposed method consists of two phases: source selection and message dissemination.

\subsection{Source-Selection Cost}
For one faulty node, the algorithm scans at most $6t$ boundary candidates. Therefore,
\begin{equation}
T_1(t)=O(t).
\end{equation}
For two faulty nodes, the algorithm also scans at most $6t$ candidates and tests whether the translated candidate is on the boundary. Using the distance formula in \eqref{eq:hex_distance_max}, each membership test is constant time. Hence,
\begin{equation}
T_2(t)=O(t).
\end{equation}
Since \(N=3t^2+3t+1\), the source-selection cost is linear in the diameter and sublinear in the network size; equivalently, \(O(t)=O(\sqrt{N})\).

\subsection{Communication Cost}
The standard one-to-all broadcast in $H_n$ completes in
\begin{equation}
B_{\mathrm{baseline}}(t)=t
\end{equation}
parallel communication steps.

The proposed method first routes the message from the original source $S$ to the selected new source $NS$. Since the network diameter is $t$, the relocation distance satisfies
\begin{equation}
R(t)=d(S,NS)\leq t.
\end{equation}
After the message reaches $NS$, the standard one-to-all broadcast is executed from $NS$ in $t$ steps. Therefore, the total worst-case communication time is
\begin{align}
B_{\mathrm{total}}(t)
&=R(t)+B_{\mathrm{baseline}}(t)\\
&\leq t+t\\
&=2t.
\end{align}
Thus, the proposed method provides deterministic one- and two-node fault tolerance while increasing the worst-case broadcast time by at most one additional diameter.

\subsection{Why the Relocation Path Is Safe}
If a faulty node $F$ is placed at distance $t$ from $NS$, then it cannot be an internal node on a shortest path from the original source $S$ to $NS$. If it were an internal node on such a path, then the remaining distance from $F$ to $NS$ along that path would be strictly less than $t$. This would contradict the construction condition $d(F,NS)=t$. Therefore, the one-to-one relocation phase is not blocked by the faulty nodes selected by the re-rooting condition.

\section{Experimental Evaluation}
\label{sec:evaluation}

In this section, we evaluate the proposed re-rooting-based fault-tolerant broadcasting method in dense Eisenstein--Jacobi networks. The goal of the evaluation is to compare the reliability and overhead of the proposed method with the original one-to-all broadcasting algorithm under static node-failure scenarios.

\subsection{Experimental Setup}
\label{subsec:experimental_setup}

We conducted simulations on dense Eisenstein--Jacobi networks generated by
\begin{equation}
	\alpha=n+(n-1)\omega.
\end{equation}
The tested diameter values were
\begin{equation}
	t=10,25,50,100,200,
\end{equation}
corresponding to
\begin{equation}
	n=11,26,51,101,201.
\end{equation}
Thus, the tested network sizes were
\begin{equation}
	N=331,1951,7651,30301,120601.
\end{equation}

For each network size, two node-failure scenarios were considered:
\begin{itemize}
	\item Single-node failure: one faulty node is present in the network.
	\item Two-node failures: two distinct faulty nodes are present in the network.
\end{itemize}

For each value of \(t\) and each fault scenario, four fault-placement modes were tested:
\begin{itemize}
	\item Random: faulty nodes are selected uniformly at random.
	\item Near-source: faulty nodes are selected close to the original source.
	\item Critical-position: faulty nodes are selected from positions likely to interrupt the broadcast forwarding process.
	\item Close-pair: faulty nodes are selected from locally clustered regions.
\end{itemize}

For each exact configuration of diameter, fault count, and fault-placement mode, 1000 independent trials were performed. The full experiment contains
\begin{equation}
	5\times 2\times 4\times 1000=40000
\end{equation}
trials.

\subsection{Simulation Procedure}

For each tested value of \(n\), the simulator first constructs the dense EJ network
\begin{equation}
	H_n=\mathbb{Z}[\omega]/(n+(n-1)\omega).
\end{equation}
Each node is represented by an EJ coordinate \(x+y\omega\), and adjacency is defined using the six EJ directions
\begin{equation}
	\pm 1,\ \pm\omega,\ \pm\omega^2.
\end{equation}

For each trial, the faulty node set is generated according to the selected fault-placement mode. The original source node is fixed as node 0, and the same faulty set is used to evaluate both the baseline and proposed methods.

A trial is considered successful if every non-faulty node receives the broadcast message. Formally, if \(F\) is the set of faulty nodes and \(R\) is the set of nodes reached by the broadcast, then the trial is successful when
\begin{equation}
	H_n\setminus F \subseteq R.
\end{equation}

\subsection{Boundary-Difference Validation}

Before evaluating broadcast performance, we validate the main theoretical condition used by the proposed method. For each tested network size, the simulator constructs the distance-\(t\) boundary and verifies that $B_t-B_t=H_n$.

\begin{table}[H]
	\caption{Validation of the boundary-difference property in dense EJ networks.}
	\label{tab:boundary_validation}
	\centering
	\begin{tabular}{c c c c c}
		\hline
		\(n\) & \(t\) & \(N\) & \(|B_t|\) & \(|B_t-B_t|\) \\
		\hline
		11  & 10  & 331    & 60   & 331 \\
		26  & 25  & 1951   & 150  & 1951 \\
		51  & 50  & 7651   & 300  & 7651 \\
		101 & 100 & 30301  & 600  & 30301 \\
		201 & 200 & 120601 & 1200 & 120601 \\
		\hline
	\end{tabular}
\end{table}

The results in Table~\ref{tab:boundary_validation} support the theoretical result that the distance-\(t\) boundary difference set covers the entire dense EJ network.

\subsection{Broadcast Success and Reachability}

Table~\ref{tab:success_reach} summarizes the broadcast success rate and average reachability of the baseline and proposed methods across all tested fault-placement modes.

\begin{table}[H]
	\caption{Broadcast performance under one- and two-node failures across all fault-placement modes.}
	\label{tab:success_reach}
	\centering
	\begin{adjustbox}{max width=\textwidth}
	\begin{tabular}{c c c c c c c c c}
		\hline
		\(n\) & \(t\) & Faults & \(N\) & Trials & Baseline (\%) & Proposed (\%) & Avg. Base Reach & Avg. Prop. Reach \\
		\hline
		11  & 10  & 1 & 331    & 4000 & 9.125 & 100.0 & 318.952    & 330 \\
		26  & 25  & 1 & 1951   & 4000 & 3.875 & 100.0 & 1920.667   & 1950 \\
		51  & 50  & 1 & 7651   & 4000 & 2.300 & 100.0 & 7589.553   & 7650 \\
		101 & 100 & 1 & 30301  & 4000 & 1.025 & 100.0 & 30188.283  & 30300 \\
		201 & 200 & 1 & 120601 & 4000 & 0.500 & 100.0 & 120378.393 & 120600 \\
		\hline
		11  & 10  & 2 & 331    & 4000 & 3.925 & 100.0 & 307.554    & 329 \\
		26  & 25  & 2 & 1951   & 4000 & 0.975 & 100.0 & 1891.973   & 1949 \\
		51  & 50  & 2 & 7651   & 4000 & 0.650 & 100.0 & 7528.753   & 7649 \\
		101 & 100 & 2 & 30301  & 4000 & 0.300 & 100.0 & 30075.920  & 30299 \\
		201 & 200 & 2 & 120601 & 4000 & 0.225 & 100.0 & 120198.111 & 120599 \\
		\hline
	\end{tabular}
	\end{adjustbox}
\end{table}

The proposed method achieves complete delivery in all one- and two-node failure configurations because the selected new source places every faulty node on the distance-\(t\) boundary.

\begin{figure}[H]
	\centering
	\includegraphics[width=0.6\linewidth]{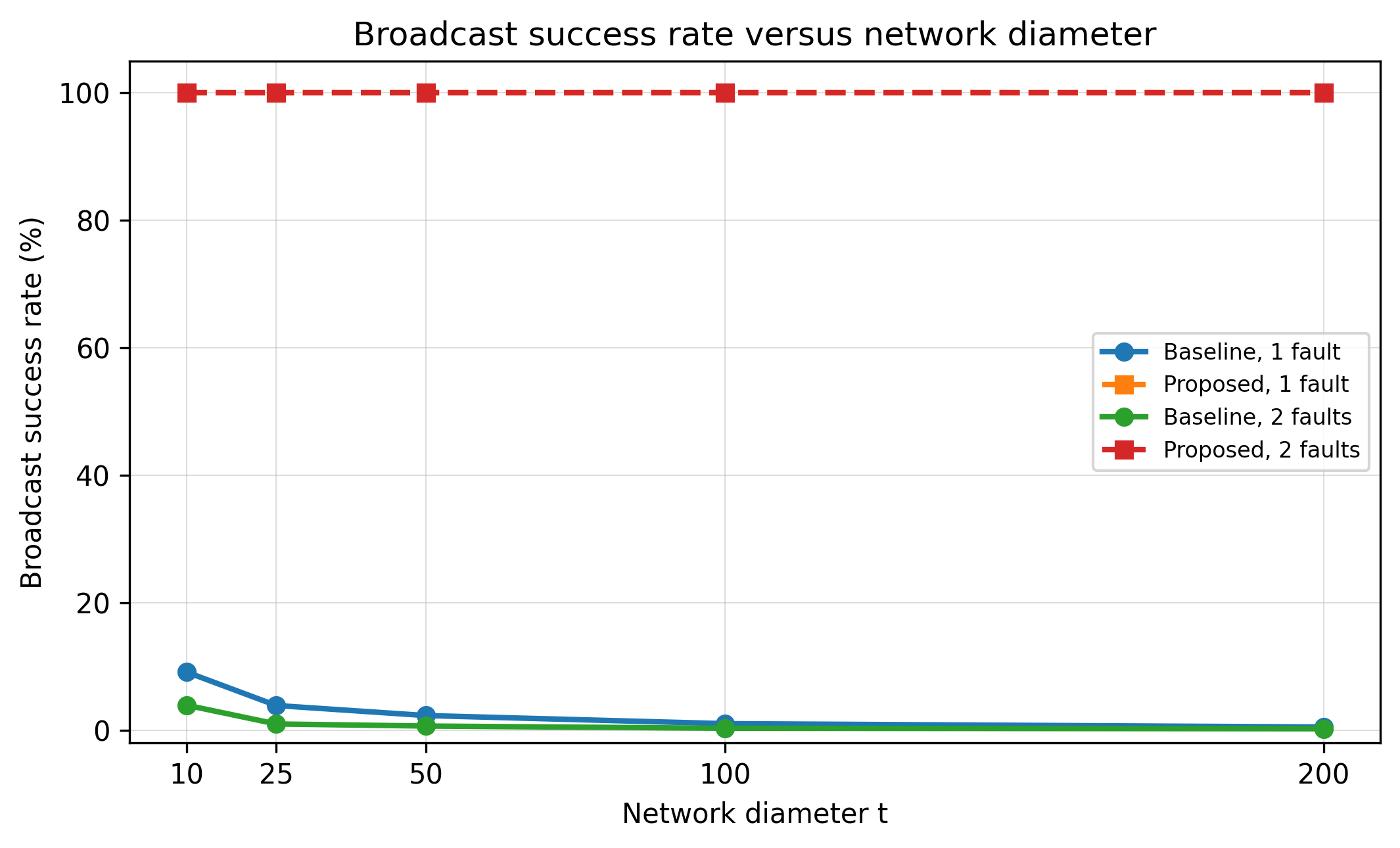}
	\caption{Broadcast success rate versus network diameter \(t\). The proposed re-rooting method achieves complete delivery in all tested one- and two-node failure scenarios, while the baseline broadcast success rate decreases as the network size increases.}
	\label{fig:success_rate}
\end{figure}

Figure~\ref{fig:success_rate} shows the broadcast success rate as the network size increases.

\subsection{Reachability Variation}

Table~\ref{tab:std_reach} reports the standard deviation of the number of reached nodes across trials.

\begin{table}[H]
	\caption{Reachability variation under fault-placement experiments.}
	\label{tab:std_reach}
	\centering
	\begin{tabular}{c c c c c}
		\hline
		\(n\) & \(t\) & Faults & Base SD & Prop. SD \\
		\hline
		11  & 10  & 1 & 15.777   & 0.000 \\
		26  & 25  & 1 & 64.026   & 0.000 \\
		51  & 50  & 1 & 183.218  & 0.000 \\
		101 & 100 & 1 & 482.064  & 0.000 \\
		201 & 200 & 1 & 1386.801 & 0.000 \\
		11  & 10  & 2 & 24.487   & 0.000 \\
		26  & 25  & 2 & 93.407   & 0.000 \\
		51  & 50  & 2 & 270.004  & 0.000 \\
		101 & 100 & 2 & 694.483  & 0.000 \\
		201 & 200 & 2 & 1806.523 & 0.000 \\
		\hline
	\end{tabular}
\end{table}

The proposed method has zero reachability variation when it reaches exactly \(N-|F|\) non-faulty nodes in every successful trial. In contrast, the baseline method may show larger variation because its performance depends strongly on the locations of the faulty nodes relative to the original broadcast tree.

\subsection{Communication and Computational Overhead}

Table~\ref{tab:overhead} reports the communication-step and computational overhead of the proposed method.

\begin{table}[H]
	\caption{Communication-step and computational overhead of the proposed re-rooting method. The checked-candidate count is bounded by \(6t\), confirming the \(O(t)\) source-selection behavior.}
	\label{tab:overhead}
	\centering
	\begin{adjustbox}{max width=\textwidth}
	\begin{tabular}{c c c c c c c c c}
		\hline
		\(n\) & \(t\) & Faults & \(N\) & Avg. Reloc. Hops & Avg. Total Steps & Min--Max Steps & Avg. Runtime (ms) & Avg. Checked \\
		\hline
		11  & 10  & 1 & 331    & 7.633   & 17.633  & 10--20   & 0.002129 & 1.000 \\
		26  & 25  & 1 & 1951   & 18.369  & 43.369  & 26--50   & 0.002267 & 1.000 \\
		51  & 50  & 1 & 7651   & 36.735  & 86.735  & 51--100  & 0.002395 & 1.000 \\
		101 & 100 & 1 & 30301  & 72.856  & 172.856 & 103--200 & 0.003168 & 1.000 \\
		201 & 200 & 1 & 120601 & 145.151 & 345.151 & 204--400 & 0.005129 & 1.000 \\
		\hline
		11  & 10  & 2 & 331    & 7.649   & 17.649  & 10--20   & 0.003954 & 7.278 \\
		26  & 25  & 2 & 1951   & 18.656  & 43.656  & 25--50   & 0.005808 & 19.065 \\
		51  & 50  & 2 & 7651   & 36.507  & 86.507  & 51--100  & 0.009112 & 39.500 \\
		101 & 100 & 2 & 30301  & 72.978  & 172.978 & 101--200 & 0.019914 & 78.585 \\
		201 & 200 & 2 & 120601 & 146.006 & 346.006 & 201--400 & 0.036946 & 151.961 \\
		\hline
	\end{tabular}
	\end{adjustbox}
\end{table}

The fault-free EJ broadcast completes in \(t=n-1\) parallel steps. The proposed method adds one preliminary relocation phase from the original source \(S\) to the selected new source \(NS\). Since the diameter of \(H_n\) is \(t\), the relocation distance satisfies
\begin{equation}
	d(S,NS)\leq t.
\end{equation}
Therefore, the total broadcast time satisfies
\begin{equation}
	B_{\mathrm{total}}(t)\leq 2t.
\end{equation}

The maximum total-step values in Table~\ref{tab:overhead} confirm this bound. The computational overhead comes from selecting \(NS\). Since the proposed source-selection algorithm scans a linear-size boundary of size \(6t\), the source-selection cost is \(O(t)\).

\begin{figure}[H]
	\centering
	\begin{subfigure}{0.48\linewidth}
		\centering
		\includegraphics[width=\linewidth]{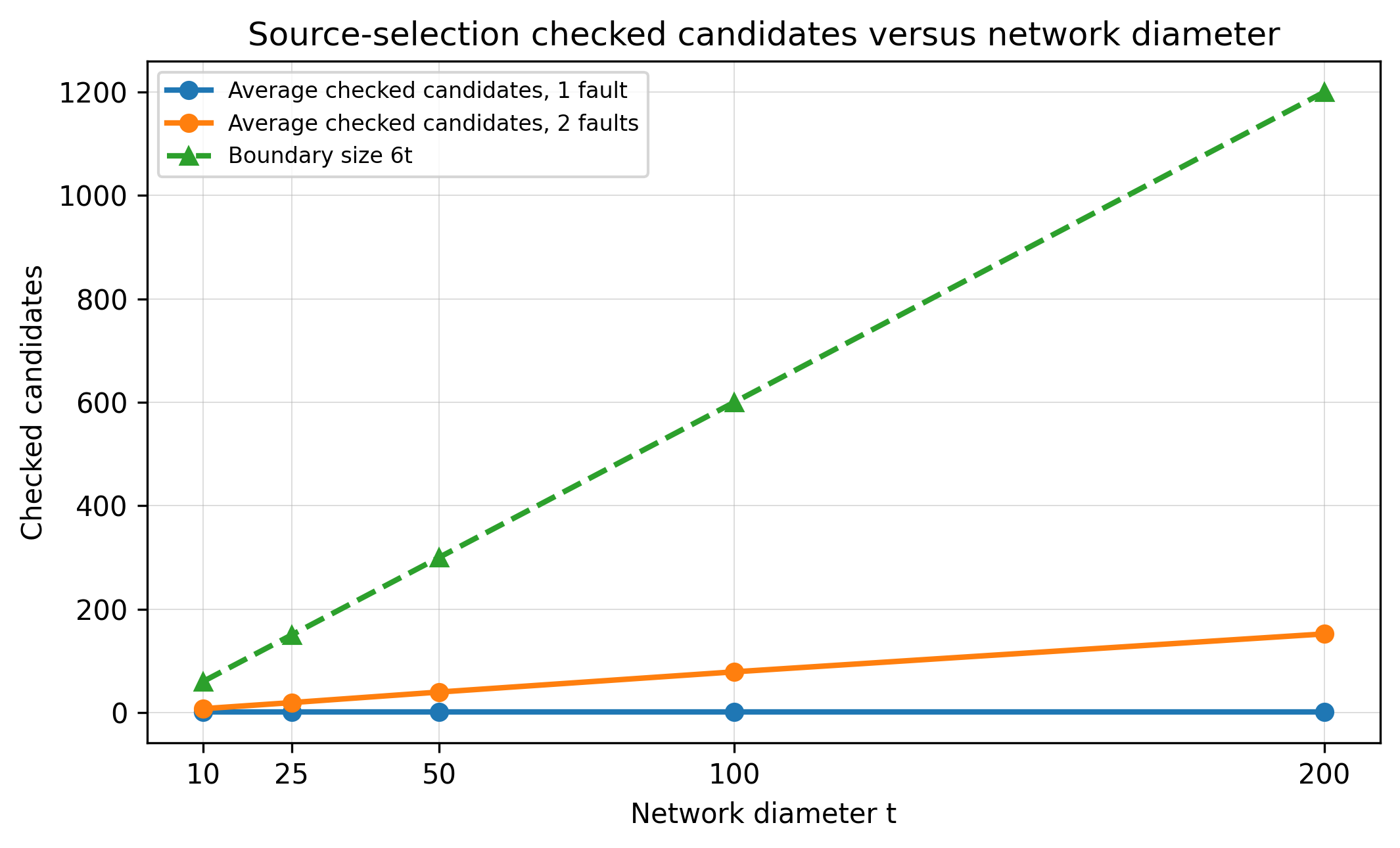}
		\caption{Checked candidates}
	\end{subfigure}
	\hfill
	\begin{subfigure}{0.48\linewidth}
		\centering
		\includegraphics[width=\linewidth]{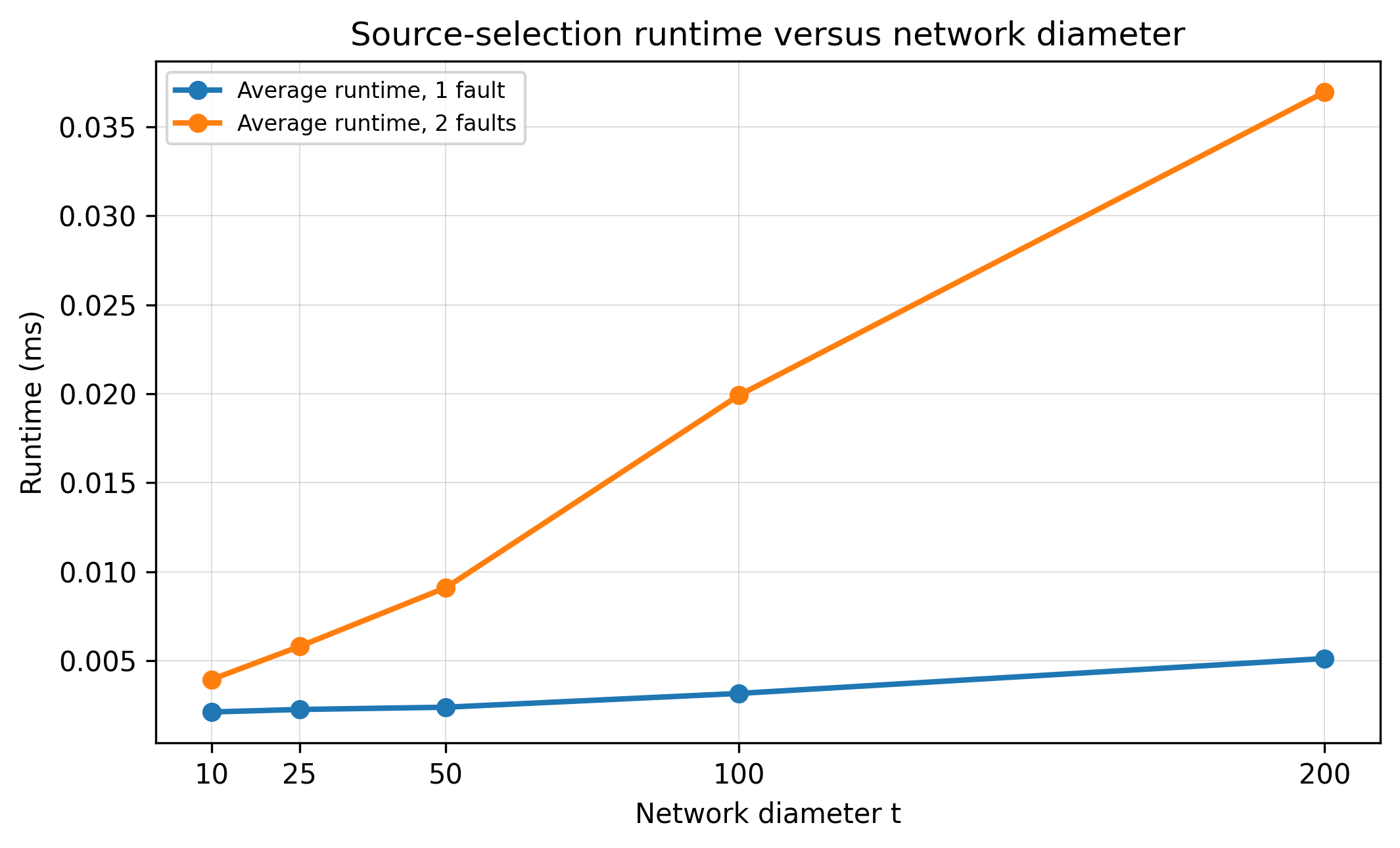}
		\caption{Runtime}
	\end{subfigure}
	\caption{Source-selection overhead versus network diameter \(t\). (a) The number of checked candidates remains bounded by the graph-distance-\(t\) boundary size \(6t\), confirming the \(O(t)\) source-selection behavior. (b) The measured runtime remains small across all tested network sizes.}
	\label{fig:source_selection_overhead}
\end{figure}

Figure~\ref{fig:source_selection_overhead} (a) shows that the number of checked candidates is bounded by \(6t\), while Figure~\ref{fig:source_selection_overhead} (b) shows the measured source-selection runtime.

\subsection{Comparative Discussion}
\label{subsec:comparative_discussion}

Although the experimental comparison in this work focuses on the original non-fault-tolerant one-to-all broadcasting algorithm, it is useful to position the proposed re-rooting method relative to common fault-tolerant communication approaches.

Table~\ref{tab:qualitative_comparison} summarizes the qualitative differences between these approaches and the proposed re-rooting method.

\begin{table}[H]
	\caption{Qualitative comparison with common fault-tolerant broadcasting strategies.}
	\label{tab:qualitative_comparison}
	\centering
	\begin{tabular}{l c c c}
		\hline
		Method & Extra & Runtime & Changed \\
		& structures & adaptation & broadcast \\
		\hline
		Redundant trees & Yes & Low/Med. & Yes \\
		Multiple paths & Yes & Medium & Yes \\
		Adaptive rerouting & No/Partial & High & Yes \\
		Local recovery & Partial & Medium & Yes \\
		Proposed re-rooting & No & Low & No \\
		\hline
	\end{tabular}
\end{table}

The main advantage of the proposed method is its simplicity. It is designed specifically for dense EJ networks, whose algebraic and geometric structure allows the source to be relocated so that one or two faulty nodes lie on the graph-distance-\(t\) boundary. Since boundary nodes are leaves in the one-to-all broadcast tree, they do not need to forward the message. Therefore, the original broadcast procedure can be preserved.

\subsection{Failure Model and Limitations}

The proposed method is designed for static node-failure scenarios. In this model, the set of faulty nodes is known before the broadcast begins and does not change during message propagation.

This work focuses on one- and two-node failures. For any pair of faulty nodes in a dense EJ network, the boundary-difference theorem guarantees the existence of a node \(NS\) whose graph distance from both faulty nodes is exactly \(t\). Therefore, the proposed method provides a deterministic guarantee for all one- and two-node fault configurations.

The present model does not directly address link failures. A failed link may interrupt communication even when both endpoint nodes remain operational. Extending the re-rooting idea to link failures would require a different condition, possibly based on avoiding failed edges along the six directional broadcast sectors rather than selecting a source that makes faulty nodes leaves.

The model also assumes that failures are static during the broadcast. If a new node fails after the message has already been relocated to \(NS\), the selected source may no longer satisfy the required distance condition. Handling dynamic or transient failures would require repeated re-rooting, online failure detection, or adaptive recovery during message propagation. These extensions are outside the scope of the present work.

\subsection{Practical Implications}

The proposed method is particularly suitable for systems in which the underlying communication topology is regular and the broadcast algorithm is already optimized for parallel propagation. Dense EJ networks have degree six, small diameter, regular structure, and efficient one-to-all communication. In such systems, adding redundant spanning trees or maintaining several backup routes may increase routing-table storage, control overhead, and implementation complexity.

By contrast, the proposed method preserves the original EJ one-to-all broadcast algorithm and adds only a lightweight source-selection and relocation phase. This property is attractive for NoC-based chip multiprocessors, many-core accelerators, distributed edge-computing systems, and embedded parallel systems where predictable communication behavior and low implementation overhead are important.

\section{Conclusion}
\label{sec:conclusion}

This paper presented a re-rooting-based fault-tolerant broadcasting method for dense Eisenstein--Jacobi networks. The method relocates the effective broadcast source so that faulty nodes are placed at graph distance equal to the network diameter from the new source. Under the standard one-to-all broadcasting algorithm, these nodes become leaf-level nodes and are not required to forward packets.

For the single-fault case, a valid new source can be selected directly from the graph-distance-$t$ boundary of the faulty node. For the two-fault case, we proved that for any pair of faulty nodes in a dense EJ network, there exists a node whose graph distance from both faulty nodes is exactly $t$. The proof is based on a boundary-difference coverage property showing that every network node can be written as the difference of two distance-$t$ boundary nodes.

The proposed source-selection algorithm runs in linear time with respect to the network diameter \(t\). The communication cost consists of one relocation phase of at most \(t\) steps followed by the standard \(t\)-step broadcast. Hence, the total worst-case broadcast time is bounded by \(2t\).

The paper also identified the theoretical limitation of the approach. An explicit counterexample in $H_4$ shows that a common graph-distance-$t$ source does not necessarily exist for arbitrary three-node fault configurations. Therefore, the proposed method is positioned as a deterministic lightweight solution for static one- and two-node failures in dense degree-six EJ networks.

Future work will investigate whether additional structure in the dense EJ boundary can be used to further reduce the source-selection cost for special fault configurations. Additional directions include extensions to link failures, dynamic failures, and partial three-fault classes that admit common boundary re-rooting, as well as implementation-level comparisons with specific redundant-tree and adaptive-routing methods under a unified EJ network simulation framework.

\section*{Acknowledgment}
The authors would like to acknowledge the support of Kuwait University and its Department of Computer Science.

\end{document}